\documentclass[runningheads]{llncs}

\usepackage{xspace}
\usepackage[utf8]{inputenc}
\usepackage{graphicx}
\usepackage{amsmath}
\usepackage{float}
\usepackage{amsfonts}
\usepackage{algorithm}      
\usepackage{algpseudocode}  
\usepackage[breaklinks,bookmarks,bookmarksdepth=1]{hyperref}
\usepackage{enumerate}
\usepackage{paralist}
\usepackage{enumitem}

\newcommand{\numEdges}{m}
\newcommand{\minima}{t}
\newcommand{\Oh}{\mathcal{O}}

\title{Encodings of Range Maximum-Sum Segment Queries and Applications}

\author{Pawe{\l} Gawrychowski\thanks{Currently holding a post-doc position at Warsaw Center of Mathematics and Computer Science.}\inst{1} \and Patrick K. Nicholson\inst{2}}
\institute{Institute of Informatics, University of Warsaw, Poland \and Max-Planck-Institut für Informatik, Saarbr\"ucken, Germany}

\allowdisplaybreaks

\begin{document}

\pagestyle{plain}
\maketitle

\begin{abstract}
Given an array $A$ containing arbitrary (positive and negative)
numbers, we consider the problem of supporting \emph{range maximum-sum
  segment queries} on $A$: i.e., given an arbitrary range $[i,j]$,
return the subrange $[i',j'] \subseteq [i,j]$ such that the sum
$\sum_{k=i'}^{j'} A[k]$ is maximized.\footnote{We use the terms
  segment and subrange interchangeably, but only use segment when
  referring to the name of the problem, for consistency with prior
  work.} Chen and Chao~[Disc.~App.~Math.~2007] presented a data
structure for this problem that occupies $\Theta(n)$ words, can be
constructed in $\Theta(n)$ time, and supports queries in $\Theta(1)$
time.  Our first result is that if only the indices $[i',j']$ are
desired (rather than the maximum sum achieved in that subrange), then
it is possible to reduce the space to $\Theta(n)$ bits, regardless the
numbers stored in $A$, while retaining the same construction and query
time.  Our second result is to improve the trivial space lower bound
for any encoding data structure that supports range maximum-sum
segment queries from $n$ bits to $1.89113n - \Theta(\lg n)$, for
sufficiently large values of $n$.  Finally, we also provide a new
application of this data structure which simplifies a previously known
linear time algorithm for finding $k$-covers: given an array $A$ of
$n$ numbers and a number $k$, find $k$ disjoint subranges $[i_1 ,j_1
],...,[i_k ,j_k]$, such that the total sum of all the numbers in the
subranges is maximized.  As observed by
Cs{\"u}r{\"o}s~[IEEE/ACM~TCBB~2004], $k$-covers can be used to
identify regions in genomes.
\end{abstract}

\section{Introduction}

Many core data structure problems involve supporting \emph{range
  queries} on arrays of numbers: see the surveys of Navarro~\cite{N13}
and Skala~\cite{S13} for numerous examples.  Likely the most heavily
studied range query problem of this kind is that of supporting
\emph{range maximum queries} (resp. \emph{range minimum queries}):
given an array $A$ of $n$ numbers, preprocess the array such that, for
any range $[i,j] \subseteq [1,n]$ we can return the index $k \in
[i,j]$ such that $A[k]$ is maximum (resp. minimum).  These kinds of
queries have a large number of applications in the area of text
indexing~\cite[Section~3.3]{F07}.  Solutions have been proposed to
this problem that achieve $\Theta(n)$ space (in terms of number of
machine words\footnote{In this paper we assume the word-RAM model with
  word size $\Theta(\log n)$ bits.}), and constant query
time~\cite{BF00,D13}.  At first glance, one may think this to be
optimal, since the array $A$ itself requires $n$ words to be stored.
However, if we only desire the index of the maximum element,
rather than the value of the element itself, it turns out that it is
possible to reduce the space~\cite{FH11}.

By a counting argument, it is possible to show that $2n -o(n)$ bits
are necessary to answer range maximum queries on an array of $n$
numbers~\cite[Sec. 1.1.2]{FH11}.  On the other hand, rather
surprisingly, it is possible to achieve this space bound, to within
lower order terms, while still retaining constant query
time~\cite{FH11}.  That is, regardless of the number of bits required
to represent the individual numbers in $A$, we can encode a data
structure in such a way as to support range maximum queries on $A$
using $2n + o(n)$ bits.  The key point is that we need not access $A$
during any part of the query algorithm.  In a more broad sense,
results of this type are part of the area of succinct data
structures~\cite{J89}, in which the aim is to represent a data
structure using space matching the information theoretic lower bound,
to within lower order terms.

In this paper, we consider \emph{range maximum-sum segment
  queries}~\cite{CC07}, where, given a range $[i,j]$, the goal is to
return a subrange $[i',j'] \subseteq [i,j]$ such that $\sum_{k =
  i'}^{j'} A[k]$ is maximized.  Note that this problem only becomes
non-trivial if the array $A$ contains negative numbers.  With a bit of
thought it is not difficult to see that supporting range maximum
queries in an array $A$ can be reduced to supporting range maximum-sum
segment queries on a modified version of $A$ that we get by padding
each element of $A$ with a sufficiently large negative number
(see~\cite{CC07} for the details of the reduction).  However, Chen and
Chao~\cite{CC07} showed that a reduction holds in the other direction
as well: range maximum-sum segment queries can be answered using a
combination of range minimum and maximum queries on several different
arrays, easily constructible from $A$.  Specifically, they show that
these queries can be answered in constant time with a data structure
occupying $\Theta(n)$ words, that can be constructed in linear time.

A natural question one might ask is whether it is possible to improve
the space of their solution to $\Theta(n)$ bits rather than
$\Theta(n)$ words, while still retaining the constant query time.  On
one hand, we were aware of no information theoretic lower bound that
ruled out the possibility of achieving $\Theta(n)$ bits.  On the other
hand, though Chen and Chao reduce the problem to several range maximum
and range minimum queries, they still require comparisons to be made
between various word-sized elements in arrays of size $\Theta(n)$
words in order to make a final determination of the answer to the
query; we review the details of their solution in
Section~\ref{sec:chenchao}.  Therefore, it was not clear by examining
their solution whether further space reduction was possible.

Our first result, presented in Section~\ref{sec:encoding}, is that if
we desire only the indices of the maximum-sum segment $[i',j']$ rather
than the value of the sum itself, then we can achieve constant query
time using a data structure occupying $\Theta(n)$ bits and
constructable in linear time.  There are many technical details, but
the main idea is to sidestep the need for explicitly storing the
numeric arrays required by Chen and Chao to make comparisons by
storing two separate graphs that are judiciously defined so as to be
embeddable in one page.  By a well known theorem of
Jacobson~\cite{J89}, combined with later
improvements~\cite{MR01,GRRR06}, it is known that one-page
graphs---also known as outerplanar graphs---can be stored in a number
of bits that is linear in the total number of vertices and edges, and
be constructed in linear time, while still retaining the ability to
navigate between nodes in constant time.  Navigating these graphs
allows us to implicitly simulate comparisons between certain numeric
array elements, thus avoiding the need to store the arrays themselves.

Our second result, presented in Section~\ref{sec:lowerbound} is to
improve the information theoretic lower bound for this problem.  It is
quite trivial to show that one requires $n$ bits to support range
maximum-sum segment queries by observing that if an array contains
only numbers from the set $\{1,-1\}$ we can recover them via
$\Theta(n)$ queries.  Since there are $2^n$ possible arrays of this
type, the lower bound follows.  In contrast, by an enumeration
argument, we give an improved bound of $1.89113n - \Theta(\lg n)$ bits
when $n$ is sufficiently large.  The main idea is to enumerate a
combinatorial object which we refer to as maximum-sum segment trees,
then bound the number of trees of this type using generating functions.

Our final result, presented in Section~\ref{sec:application}, is a new
application for maximum-sum segment data structures. Given an array
and a number $k$, we want to find a $k$-cover: i.e., $k$ disjoint
subranges with the largest total sum. This problem was first studied
by Csur\"{o}s~\cite{Csuros}, who was motivated by an application in
bioinformatics, and constructed an $\Oh(n\log n)$ time algorithm.
Later, an optimal $\Oh(n)$ time solution was found by Bengtsson and
Chen~\cite{optimal}. We provide an alternative $\Oh(n)$ time solution,
which is an almost immediate consequence of any constant time
maximum-sum segment data structure that can be constructed in linear
time. An advantage of our algorithm is that it can be also used to
preprocess the array just once, and then answer the question for any
$k \in [1,n]$ in $\Oh(k)$ time. We remark that this is related, but
not equivalent, to finding $k$ non-overlapping maximum-sum segments,
and finding $k$ maximum-sum segments. In the latter, one considers all
$\binom{n}{2} + n$ segments ordered non-increasingly according to
their corresponding sums, and wants to select the $k$-th
one~\cite{Liu}. In the former, one repeats the following operation $k$
times: find a maximum-sum segment disjoint from all the previously
chosen segments, and add it to the current set~\cite{Ruzzo}.

\section{Notation and Definitions}

We follow the notation of Chen and Chao~\cite{CC07} with a few minor
changes.  Let $A$ be an array of $n$ numbers.  Let $S(i,j)$ denote the
sum of the values in the range $[i, j]$: i.e., $S(i,j) =
\sum_{k=i}^{j} A[k]$. Let $C$ be an array of length $n$ such that
$C[i]$ stores the cumulative sum $S(1,i)$.  Note that $S(i,j) = C[j] -
C[i-1]$ if $i > 1$.

Given an arbitrary array $B$, a range maximum query
$\textsc{RMaxQ}(B,i,j)$ returns the index of the rightmost maximum
value in the subarray $B[i,j]$; the query $\textsc{RMinQ}(B,i,j)$ is
defined analogously.  A range maximum-sum segment query
$\textsc{RMaxSSQ}(A,i,j)$ returns a subrange $A[i',j']$ such that $i
\le i' \le j' \le j$, and $S(i',j')$ is maximum.  If there is a tie,
then our data structure will return the shortest range with the
largest value of $j'$; i.e., the rightmost
one.\footnote{Alternatively, we can return the leftmost such range by
  symmetry.}  Note that the answer is a range, specified by its
endpoints, rather than the sum of the values in the range.  Also note
that if the range $A[i,j]$ contains only non-positive numbers, we
return an empty range as the solution: we discuss alternatives in
Appendix~\ref{app:emptyrange}, but suggest reading on to
Section~\ref{sec:rmss} first.

\begin{figure}
\centering
\includegraphics[scale=1.3]{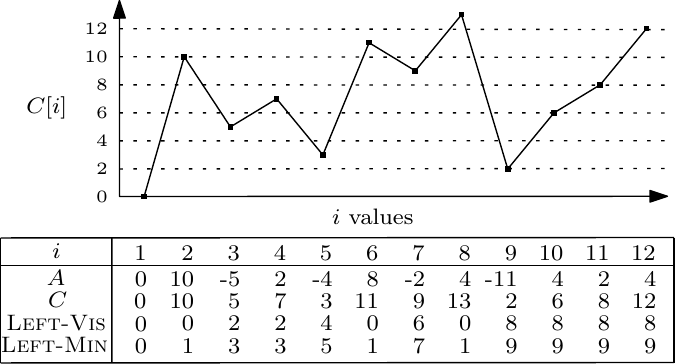}
\caption{\label{fig:definitions}Example array $A$ and the values of
  the various definitions presented in this section that are induced
  by $A$. The list of candidates for this array are: $(1,2)$, $(3,4)$,
  $(1,6)$, $(1,8)$, $(9,10)$, $(9,11)$, $(9,12)$.}
\end{figure}

The \emph{left visible region} $\textsc{Left-Vis}(i)$ of array $C$ at
index $i$ is defined to be the maximum index $1 \le j < i$ such that
$C[j] \ge C[i]$, or $0$ if no such index exists: this corresponds to
the ``left bound'' definition of Chen and Chao.  The \emph{left
  minimum} $\textsc{Left-Min}(i)$ of array $C$ is defined to be
$\textsc{RMinQ}(C,\textsc{Left-Vis}(i) + 1,i)$ for $1 < i \le n$.
This corresponds to (one less than) the ``good partner'' definition of
Chen and Chao.  See Figure~\ref{fig:definitions} for an illustration
of these definitions. The pairs $(\textsc{Left-Min}(i),i)$ where
$\textsc{Left-Min}(i) < i$ are referred to as \emph{candidates}. Thus,
candidates are such that the sum in $A[\textsc{Left-Min}(i)+1..i]$ is
positive. One issue is that the pair $(\textsc{Left-Min}(1),1)$ might
have a positive sum, but not be a candidate.  Without loss of
generality we can ignore this case by assuming $A[1] = 0$, as in
Figure~\ref{fig:definitions}. Define the \emph{candidate score array}
$D$ such that $D[i] = S(\textsc{Left-Min}(i) + 1,i)$ if
$(\textsc{Left-Min}(i),i)$ is a candidate, and $D[i] = 0$ otherwise,
for all $i \in [1,n]$.  Thus, for \emph{non-candidates}, the candidate
score is $0$.  Let $x' = \textsc{RMaxQ}(D,1,n)$ and $\minima' =
\textsc{Left-Min}(x')$. From the definitions it is not too difficult
to see that $\textsc{RMaxSSQ}(A,1,n)$ is $[\minima'+1,x']$ if
$\minima' \neq x'$, and the empty range otherwise.

\section{Preliminary Data Structures}

We make use of the following result for supporting range minimum and
range maximum queries on arrays.

\begin{lemma}[\cite{FH11}]
\label{lem:rmq}
Given an array $B$ of $n$ numbers, we can store a data structure of
size $2n + o(n)$ bits such that $\textsc{RMaxQ}(B,i,j)$ can be
returned for any $1 \le i \le j \le n$ in $\Oh(1)$ time.  Similarly, we can also answer
$\textsc{RMinQ}(B,i,j)$ for $1 \le i \le j \le n$ using the same space
bound.  These data structures can be constructed in linear time so as
to return the rightmost maximum (resp. minimum) in the case of a tie.
\end{lemma}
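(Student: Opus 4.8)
The plan is to reduce range minimum queries to lowest-common-ancestor (LCA) queries on the \emph{Cartesian tree} of $B$, and then to encode that tree succinctly. First I would recall the standard observation that \textsc{RMinQ} is \emph{order-invariant}: the returned index depends only on the relative order of the entries of $B$ (together with the tie-breaking rule), not on their actual values. Hence we may discard $B$ entirely and retain only a combinatorial object that captures this order, namely the Cartesian tree $T$: its root is the position of the minimum of $B[1,n]$, breaking ties in favour of the rightmost occurrence, and its left and right subtrees are the Cartesian trees of the prefix and suffix flanking that position. A routine induction then shows that, if we identify the $i$-th array position with the $i$-th node of $T$ in inorder, then $\textsc{RMinQ}(B,i,j)$ is exactly the inorder index of $\mathrm{LCA}(i,j)$ in $T$; the rightmost-minimum tie-break is inherited from the rightmost-minimum rule used to choose each subtree root.

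Second, I would count the objects to pin down the target space. There are $C_n = \frac{1}{n+1}\binom{2n}{n}$ distinct Cartesian trees on $n$ nodes, so $\lg C_n = 2n - \Theta(\lg n)$ bits are information-theoretically necessary, and this is the bound we aim to match. To achieve it, encode $T$ by a balanced-parenthesis (BP) sequence, which uses exactly $2n$ bits, and augment it with the standard $o(n)$-bit succinct-tree index structures (rank/select on the parenthesis sequence together with a range min-max summary). This toolbox is designed to support, in $\Oh(1)$ time, the three primitives the query needs: selecting the node of a given inorder rank, computing the LCA of two nodes, and reading off the inorder rank of the resulting node. Composing them answers $\textsc{RMinQ}(B,i,j)$ in constant time within $2n+o(n)$ bits. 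For \textsc{RMaxQ} one simply repeats the whole construction on the max-Cartesian tree, equivalently on $-B$, again breaking ties toward the rightmost occurrence.

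Third, for the construction time I would use the classical incremental, stack-based algorithm that scans $B$ once from left to right while maintaining the current rightmost root-to-leaf path on a stack; each element is pushed and popped at most once, so $T$ is built in $\Oh(n)$ time, after which the BP sequence and its $o(n)$-bit auxiliary structures are assembled by a single linear pass.

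The step I expect to be the main obstacle is the simultaneous $\Oh(1)$-time support of all three navigation primitives using only $o(n)$ bits of redundancy on top of the bare $2n$-bit sequence: the LCA operation is the delicate one in the BP model, and care is needed so that the inorder-to-node and node-to-inorder conversions compose correctly with it (this is where the exact choice of encoding must be fixed). A secondary point requiring attention is the consistency of tie-breaking—verifying that choosing the rightmost minimum as each subtree root really does force $\mathrm{LCA}$ to return the rightmost minimum of \emph{every} query range, rather than some other tied index.
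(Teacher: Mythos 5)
Note first that the paper contains no proof of this lemma: it is imported wholesale from Fischer and Heun~\cite{FH11}, so the only meaningful comparison is against their proof. Your outline is a correct and realizable strategy, but it is a genuinely different route from the one in~\cite{FH11}. Fischer and Heun deliberately \emph{avoid} the Cartesian-tree-plus-LCA reduction; instead they define the \emph{2d-min-heap} of $B$ (an ordinal tree in which the parent of $i$ is the nearest $j<i$ with a smaller value), encode it in DFUDS, and answer $\textsc{RMinQ}$ by constant-time DFUDS operations. The reason they do so is precisely the obstacle you flag at the end: the plain $2n$-bit balanced-parenthesis encoding represents \emph{ordinal} trees and does not distinguish a left-only child from a right-only child, so ``the BP sequence of the Cartesian tree'' is not even well defined until you fix a dedicated binary-tree encoding, and supporting inorder-select, inorder-rank, and LCA simultaneously in $\Oh(1)$ time with only $o(n)$ redundancy on such an encoding is the entire technical difficulty (this was worked out later, e.g.\ by Davoodi, Raman, and Satti, which yields an alternative proof exactly along your lines). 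Your remaining ingredients are sound: order-invariance of the answer, the correspondence $\textsc{RMinQ}(B,i,j)=\mathrm{inorder}(\mathrm{LCA}(i,j))$, the observation that rightmost tie-breaking at every recursive root forces the LCA to be the rightmost minimum of every query range (this does hold: the LCA $k$ of $i,j$ lies in $[i,j]$, all of $[i,j]$ descends from $k$, left descendants have values $\ge B[k]$ and right descendants values $>B[k]$), and the stack-based linear-time construction. In short: your approach buys conceptual transparency and matches the information-theoretic bound $\lg C_n$ directly, while the 2d-min-heap/DFUDS route of~\cite{FH11} buys an implementation that needs only off-the-shelf succinct ordinal-tree operations, which is why it was the first to actually achieve $2n+o(n)$ bits.
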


We also require the following succinct data structure result for
representing one-page graphs.  A one-page (or outerplanar) graph $G$
has the property that it can be represented by a sequence of balanced
parentheses~\cite{J89}.  Equivalently, there exists a labelling $1,
..., n$ of the vertices in $G$ in which there is no pair of edges
$(u_1,u_2)$ and $(u_3,u_4)$ in $G$ such that $1 \le u_1 < u_3 < u_2 <
u_4 \le n$.  That is, if we refer to vertices by their labels, then we
have that the set of ranges represented by edges are either nested or
disjoint: we refer to this as the \emph{nesting property}.  Note that
our definitions of the navigation operations differ (slightly) from
the original source so as to be convenient for our application, and we
explain how to support these operations in
Appendix~\ref{app:one-page}.

\begin{lemma}[Based on~\cite{MR01,GRRR06}]
\label{lem:bp}
Let $G$ be a one-page multigraph with no self-loops: i.e., $G$ has
vertex labels $1, ..., n$, and $m$ edges with the nesting property.
There is a data structure that can represent $G$ using $2(n + m) + o(n
+ m)$ bits, and be constructed in $\Theta(n+m)$ time from the adjacency
list representation of $G$, such that the following operations can be
performed in constant time:

\begin{enumerate}
\item $\textsc{Degree}(G,u)$ returns the degree of vertex $u$.
\item $\textsc{Neighbour}(G,u,i)$ returns the index of the vertex which
  is the endpoint of the $i$-th edge incident to $u$, for $1 \le i \le
  \textsc{Degree}(u)$.  The edges are sorted in non-decreasing order
  of the indices of the endpoints:  $\textsc{Neighbour}(G,u,1) \le
  \textsc{Neighbour}(G,u,2) \le \ldots \le
  \textsc{Neighbour}(G,u,\textsc{Degree}(G,u))$.
\item $\textsc{Order}(G,u,v)$ returns the order of the edge $(u,v)$
  among those incident to $u$: i.e., return an $i$ such that
  $\textsc{Neighbour}(G,u,i) = v$.\footnote{Since $G$ may be a
    multigraph the value of $i$ may be arbitrary among all possible
    values that satisfy the equation.  We note that this is more
    general than we require, as in our application we will not execute
    this type of query on a multigraph, so the answer will always be
    unique.}

\end{enumerate}
In all of the operations above, a vertex is referred to by its label,
which is an integer in the range $[1,n]$.
\end{lemma}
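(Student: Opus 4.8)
The plan is to encode $G$ as a single sequence of balanced parentheses and then invoke the succinct balanced-parenthesis machinery of~\cite{MR01,GRRR06}. Concretely, I would process the vertex labels $1, \ldots, n$ from left to right, reserving for each vertex $u$ a contiguous \emph{block} in the sequence. By the nesting property every edge $(u,v)$ with $u<v$ can be viewed as the interval $[u,v]$, and these intervals are laminar; hence we may emit an opening parenthesis for $(u,v)$ in the block of its left endpoint $u$ and the matching closing parenthesis in the block of its right endpoint $v$. Inside the block of $u$ I would list first the closing parentheses of edges whose right endpoint is $u$ (innermost first, so that nested intervals sharing the endpoint $u$ close in the correct order) and then the opening parentheses of edges whose left endpoint is $u$ (outermost first). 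The nesting property guarantees that the resulting string of $2m$ symbols is balanced, and that each vertex occupies a contiguous block.

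To recover vertex boundaries I would additionally mark the blocks, so that the whole object is stored as the balanced-parenthesis representation of~\cite{MR01,GRRR06}, occupying $2(n+m) + o(n+m)$ bits and constructible in $\Theta(n+m)$ time. I assume this representation supports, in $\Oh(1)$ time, the matching operations $\textsc{FindClose}$ and $\textsc{FindOpen}$ (sending an opening, resp.\ closing, parenthesis to its mate), the rank and select primitives on each parenthesis type, and, via a rank query on the block markers, a map $\pi$ taking any position to the label of the vertex whose block contains it.

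Given these primitives, $\textsc{Degree}(G,u)$ is just the number of parentheses in the block of $u$, obtained from the block boundaries by rank and select. For $\textsc{Neighbour}(G,u,i)$ I would observe that the neighbours with label $<u$ correspond exactly to the closing parentheses of $u$'s block and those with label $>u$ to its opening parentheses, so the two groups are already separated and appear in the required order; the $i$-th neighbour is then found by selecting the appropriate parenthesis in the block, applying $\textsc{FindOpen}$ or $\textsc{FindClose}$, and reading off $\pi$ of the mate. Here the laminar structure forces the endpoints within each group to vary monotonically along the block, so the within-block reading order matches the non-decreasing order demanded by the statement, up to reversing one group, which is a fixed index adjustment.

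The step I expect to be the main obstacle is $\textsc{Order}(G,u,v)$, which must, in $\Oh(1)$ time, locate a parenthesis of $u$'s block whose mate lies in $v$'s block and report its rank. The difficulty is that, a priori, finding the parenthesis with a prescribed endpoint inside $u$'s monotone block looks like a search. I would resolve this by exploiting the laminar and matching structure directly: the mate of the relevant parenthesis is pinned down by combining $\textsc{FindOpen}$/$\textsc{FindClose}$ with the block of $v$ and the enclosing relationships of the parenthesis forest, after which the rank within $u$'s block follows from a single rank query; the multigraph case, where several mates qualify, is exactly where the freedom granted by the footnote is used. Verifying that this can be carried out with a constant number of primitive calls, and that the block-marking overhead fits within the $o(n+m)$ term, is the technical heart of the argument, and is where I would concentrate the detailed bookkeeping deferred to the appendix.
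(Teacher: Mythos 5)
Your proposal is correct and takes essentially the same approach as the paper's proof (Appendix~\ref{app:one-page}): a balanced-parenthesis sequence with a contiguous block per vertex consisting of closing parentheses (innermost first) followed by opening parentheses (outermost first), vertex delimiters enabling label-to-position conversion, and the succinct parenthesis machinery of~\cite{MR01,GRRR06} so that \textsc{Degree}, \textsc{Neighbour}, and \textsc{Order} reduce to block arithmetic, matching, and rank/select. The paper resolves the step you flag as the main obstacle---locating the parenthesis pair of a given edge $(u,v)$ for \textsc{Order}---by citing the known constant-time edge-location property of this representation, and handles the same within-block ordering adjustment you mention via an auxiliary structure marking where the opening parentheses begin in each block.
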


\section{\label{sec:rmss}Supporting Range Maximum-Sum Segment Queries}

In this section we present our solution to the range maximum-sum
segment query problem which occupies linear space in bits.  First we
begin by summarizing the solution of Chen and Chao~\cite{CC07}. Then,
in Section~\ref{sec:q-array} we describe an alternative data structure
that occupies $\Theta(n)$ words of space.  Finally, we reduce the
space of our alternative data structure to linear in bits.

\subsection{\label{sec:chenchao}Answering Queries using $\Theta(n)$ Words}

In the solution of Chen and Chao the following data structures are
stored: the array $C$; a data structure to support
$\textsc{RMinQ}(C,i,j)$ queries; a data structure for supporting
$\textsc{RMaxQ}(D,i,j)$ queries; and, finally, an array $P$ of length
$n$ where $P[i] = \textsc{Left-Min}(i)$.  Thus, the overall space is
linear in words.

The main idea is to examine the candidate $(P[x],x)$ whose right
endpoint achieves the maximum sum in the range $[i,j]$.  If $P[x]+1
\in [i,j]$ then Chen and Chao proved that $[P[x]+1,x]$ is the correct
answer.  However, if $P[x] + 1 \not\in [i,j]$ then they proved that
there are two possible ranges which need to be examined to determine
the answer.  In this case we check the sum for both ranges and return
the range with the larger sum.  The pseudocode for their solution to
answering the query $\textsc{RMaxSSQ}(A,i,j)$ is presented in
Algorithm~\ref{alg:rmaxssq}:

\begin{algorithm}
\caption{Computing $\textsc{RMaxSSQ}(A,i,j)$.}
\label{alg:rmaxssq}
\begin{algorithmic}[1]
\State $x \gets \textsc{RMaxQ}(D,i,j)$
\If{$P[x] = x$}
 \Comment{In this case $x$ is a non-candidate, so $D[x] = 0$}
 \State \Return the empty range
\ElsIf{$P[x] + 1 \ge i$}
 \Comment{In this case $[P[x]+1,x] \subseteq [i,j]$}
 \State \Return $[P[x] + 1,x]$
\Else
 \Comment{In this case $[P[x]+1,x] \not\subseteq [i,j]$}
 \State $y \gets\textsc{RMaxQ}(D,x+1,j)$ 
 \State $\minima\gets\textsc{RMinQ}(C,i-1, x-1)$
  \If{\label{step:sums}$S(\minima + 1,x) > S(P[y] + 1,y)$}
   \State \Return $[\minima + 1,x]$
  \Else
   \State \Return $[P[y] + 1,y]$
  \EndIf
\EndIf
\end{algorithmic}
\end{algorithm}

Items~(\ref{enum:candidate-fits}),~(\ref{enum:max-x})
and~(\ref{enum:two-candidates}) of the following collection of lemmas
by Chen and Chao imply that the query algorithm is correct.  We use
item~(\ref{enum:nested-p}) later.

\begin{lemma}[\cite{CC07}]
\label{lem:cc-prop}
The following properties hold (using the notation from Algorithm~\ref{alg:rmaxssq}):
\begin{enumerate}
\item \label{enum:candidate-fits}If $[P[x]+1,x] \subseteq [i,j]$ then
  $\textsc{RMaxSSQ}(A,i,j)$ is $[P[x]+1,x]$.
\item \label{enum:max-x} The following inequalities hold: $x < P[y]
  \le y$.
\item \label{enum:two-candidates} If $[P[x]+1,x] \not\subseteq [i,j]$
  then $\textsc{RMaxSSQ}(A,i,j)$ is $[P[y]+1,y]$ or $[\minima +1,x]$.
\item \label{enum:nested-p} If $1 \le i < j \le n$ then it cannot be
  the case that $P[i] < P[j] \le i$.  That is, the ranges $[P[i],i]$
  and $[P[j],j]$ have the nesting property for all $1 \le i < j \le
  n$.
\end{enumerate}
\end{lemma}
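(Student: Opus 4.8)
The plan is to establish the four items in the order (\ref{enum:nested-p}), (\ref{enum:max-x}), (\ref{enum:candidate-fits}), (\ref{enum:two-candidates}), since each later item reuses earlier ones but never the reverse. The single combinatorial tool underpinning the two correctness items is a \emph{domination lemma}: for any segment $[a,b]$ with $S(a,b)>0$ (equivalently $C[b]>C[a-1]$) there is a candidate $(P[b'],b')$ with right endpoint $b'\in[a,b]$ and $D[b']\ge S(a,b)$. To prove it I would let $b'$ be the \emph{leftmost} index of $[a,b]$ maximizing $C$; then every $k\in[a,b')$ has $C[k]<C[b']$, which forces $\textsc{Left-Vis}(b')\le a-2$, since the value $C[a-1]<C[b']$ disqualifies $a-1$. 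Consequently $a-1$ lies inside $[\textsc{Left-Vis}(b')+1,b']$, so $C[P[b']]\le C[a-1]$ and hence $S(P[b']+1,b')=C[b']-C[P[b']]\ge C[b']-C[a-1]\ge S(a,b)$. The only subtlety is that this dominating candidate may extend to the \emph{left} of $a$, i.e. possibly $P[b']+1<a$; this is harmless for (\ref{enum:candidate-fits}) but is precisely the phenomenon that makes (\ref{enum:two-candidates}) require two candidates.

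For item (\ref{enum:nested-p}) I would argue by contradiction, assuming $P[i]<P[j]\le i<j$ and writing $a=P[i]$, $b=P[j]$. Since $\textsc{Left-Vis}(j)<b\le i<j$ we get $\textsc{Left-Vis}(j)<i$, and as $i$ then lies strictly inside the visible region of $j$ we have $C[i]<C[j]\le C[\textsc{Left-Vis}(j)]$. The split is on the location of $a$. If $a>\textsc{Left-Vis}(j)$, then $a,b\in[\textsc{Left-Vis}(j)+1,j]$ where $b$ is the rightmost minimum, giving $C[a]\ge C[b]$, while $a,b\in[\textsc{Left-Vis}(i)+1,i]$ where $a$ is the rightmost minimum and $b>a$, giving $C[b]>C[a]$ --- a contradiction. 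If instead $a\le\textsc{Left-Vis}(j)$, then $\textsc{Left-Vis}(j)$ is an index below $i$ with $C\ge C[i]$, so $\textsc{Left-Vis}(i)\ge\textsc{Left-Vis}(j)\ge a>\textsc{Left-Vis}(i)$, again a contradiction. Item (\ref{enum:max-x}) then follows quickly: $y>x$ and $P[y]\le y$ are immediate, and the nesting of (\ref{enum:nested-p}) leaves only $P[y]>x$ (disjoint) or $P[y]\le P[x]$ (nested). The nested case is ruled out because it would force $C[x]<C[y]$ (as $x$ sits in the visible region of $y$) together with $C[P[y]]\le C[P[x]]$, whence $D[y]>D[x]$, contradicting that $x$ is the \emph{rightmost} maximum of $D$ on $[i,j]\ni y$.

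For item (\ref{enum:candidate-fits}), assume $[P[x]+1,x]\subseteq[i,j]$ and let $[i^*,j^*]$ be a maximum-sum answer of value $V$. On one hand the fitting candidate gives $V\ge D[x]$. On the other hand, applying the domination lemma to $[i^*,j^*]$ (or noting $V\le 0$ trivially) yields a candidate with right endpoint $b'\in[i^*,j^*]\subseteq[i,j]$ and $D[b']\ge V$; since $x$ maximizes $D$ over \emph{all} of $[i,j]$, fitting or not, we get $D[x]\ge D[b']\ge V$. Hence $V=D[x]$ and $[P[x]+1,x]$ is optimal. The crucial point that sidesteps the left-extension subtlety is that $x$ dominates $b'$ purely because $b'\in[i,j]$, with no need for $b'$'s candidate to fit.

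Item (\ref{enum:two-candidates}) is the heart of the argument and where the obstacle lies. Here $[P[x]+1,x]\not\subseteq[i,j]$ forces $P[x]+1<i$, and a short computation gives $\textsc{Left-Vis}(x)<i$, so every $b\in[i,x-1]$ has $C[b]<C[x]$. I would split on the right endpoint $j^*$ of the optimum. If $j^*\le x$, this last fact shows the best segment ending at any $b<x$ is strictly worse than the one ending at $x$, so the answer is $[\minima+1,x]$ with $\minima=\textsc{RMinQ}(C,i-1,x-1)$. If $j^*>x$, I apply the domination lemma to the optimum and let $b'$ be the leftmost $C$-maximizer of $[i^*,j^*]$; using $C[b]<C[x]$ for $b\in[i,x-1]$ one shows $b'>x$ or $b'=x$. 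When $b'>x$ the candidate lies in $[x+1,j]$, so $D[y]\ge D[b']\ge V$ while the fitting candidate $y$ gives $D[y]\le V$, forcing the answer $[P[y]+1,y]$; here item (\ref{enum:candidate-fits}) applied to the subwindow $[x+1,j]$, legitimized by $P[y]>x$ from (\ref{enum:max-x}), identifies $[P[y]+1,y]$ as that subwindow's optimum. When $b'=x$ one gets $C[j^*]=C[x]$, so the straddling part $[x+1,j^*]$ sums to zero and the optimum can be taken with right endpoint $x$, reducing to the first case. I expect the genuine difficulty to be exactly this straddling analysis --- excluding an optimum that crosses $x$ and beats both named candidates --- together with making the tie-breaking rule consistent across the reductions; sum-optimality is settled by the above, and the tie-breaking requires only careful bookkeeping of the rightmost $\textsc{RMinQ}/\textsc{RMaxQ}$ conventions.
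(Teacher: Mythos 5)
The paper never proves this lemma: it is quoted from Chen and Chao~\cite{CC07} and used as a black box, so there is no internal proof to compare against, and your argument must be judged on its own merits. Its mathematical core is sound. Your domination lemma is correct as sketched: taking $b'$ to be the \emph{leftmost} maximizer of $C$ in $[a,b]$ forces $\textsc{Left-Vis}(b')\le a-2$, hence $C[P[b']]\le C[a-1]$ and $D[b']\ge S(a,b)$, and since the window minimum is at most $C[a-1]<C[b']$ we also get $P[b']<b'$, i.e., $(P[b'],b')$ really is a candidate. This is exactly the right tool, as it identifies the maximum of $D$ over a window with the maximum subrange sum. Your two-case contradiction for item (4), the nesting-plus-$D[y]>D[x]$ argument for item (2), and the three-way case analysis for item (3) (optimum ending at or left of $x$; optimum whose $C$-maximizer lies right of $x$, handled through $y$ and item (2); straddling optimum whose $C$-maximizer is $x$ itself, reduced to the first case via $C[j^*]=C[x]$) all check out, modulo small steps left implicit but immediate: in the else branch of Algorithm~\ref{alg:rmaxssq} one has $i\ge 2$ and $C[i-1]<C[x]$, so the optimum value $V$ is strictly positive, and $D[y]\ge V>0$ forces $y$ to be a candidate.

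The one genuine weak point is your closing claim that the tie-breaking ``requires only careful bookkeeping.'' It does not, because the range the algorithm returns is in general neither the shortest nor the rightmost maximum-sum range, so no amount of care with the rightmost conventions of $\textsc{RMaxQ}$/$\textsc{RMinQ}$ alone identifies it. Concretely, for $A=(0,10,-5,5)$, i.e., $C=(0,10,5,10)$, the query on $[1,4]$ has maximum-sum ranges including $[2,2]$ and $[2,4]$, and the algorithm returns $[2,2]$, which is not the rightmost; for $A=(0,10,-5,-5,5,5)$, i.e., $C=(0,10,5,0,5,10)$, the maximum-sum ranges include $[2,2]$ and $[5,6]$, and the algorithm returns $[5,6]$, which is not the shortest. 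What actually pins down items (1) and (3) as exact identities is the characterization that the returned range is the rightmost maximum-sum range of \emph{candidate form} $[P[b]+1,b]$, equivalently that $b$ is the rightmost index in the window with $D[b]=V$; your domination lemma proves such a range exists and achieves $V$ --- which is all that correctness of Algorithm~\ref{alg:rmaxssq} needs --- but your write-up stops short of stating and proving this characterization, and the paper's own one-line description of the tie-break cannot be taken at face value when doing so.
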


On line~\ref{step:sums} of Algorithm~\ref{alg:rmaxssq} the sums can be
computed in constant time using the array $C$.  All other steps either
defer to the range maximum or minimum structures, or a constant number
of array accesses.  Thus, the query algorithm takes constant time to
execute.

\subsection{Reducing the Space to $\Theta(n)$ Bits}

Observe that the data structure for answering $\textsc{RMaxQ}$
(resp. $\textsc{RMinQ}$) queries on $D$ (resp. $C$) only requires $2n
+ o(n)$ bits by Lemma~\ref{lem:rmq}; $4n +o(n)$ bits in total for both
structures.  Thus, if we can reduce the cost of the remaining data
structures to $\Theta(n)$ bits, while retaining the correctness of the
query algorithm, then we are done.  There are two issues that must be
overcome in order to reduce the overall space to linear in bits:

\begin{enumerate}
\item The array $P$ occupies $n$ words, so we cannot store it
  explicitly.

\item In the case where $[P[x]+1,x]$ is not contained in $[i,j]$, we
  must compare $S(\minima+1,x)$ and $S(P[y]+1,y)$ without explicitly storing
  the array~$C$.
\end{enumerate}

The first issue turns out to be easy to deal with: we instead encode
the graph $G = ([n],\{(P[x],x)|1 \le x \le n, P[x] < x\}$, which we
call the \emph{candidate graph} using the following lemma:

\begin{lemma}
\label{lem:p-array}
The candidate graph $G$ can be represented using $4n +o(n)$ bits of
space, such that given any $x \in [1,n]$ we can return
$\textsc{Left-Min}(x)$ in $\Oh(1)$ time.
\end{lemma}

\begin{proof}
Item~(\ref{enum:nested-p}) of Lemma~\ref{lem:cc-prop} implies that the
edges of $G$ have the nesting property.  We store $G$ in the data
structure of Lemma~\ref{lem:bp}.  Given $x$, we can retrieve
$\textsc{Left-Min}(x)$ since, if $(\textsc{Left-Min}(x),x)$ is a
candidate, we have that the degree of $x \in V(G)$ is exactly one and
to a vertex with index less than $x$.  If $x$ is a non-candidate, then
the degree is either exactly zero, or any edge from $x$ is to a vertex
with index larger than $x$, and we can return $P[x] = x$. Thus, we can
navigate to~$\textsc{Left-Min}(x)$ in $\Oh(1)$ time.\qed
\end{proof}

From here onward, we can assume that we have access to the array $P$,
which we simulate using Lemma~\ref{lem:p-array}. Unfortunately, the
second issue turns out to be far more problematic.  We overcome this
problem via a two step approach.  In the first step, we define another
array $Q$ which we will use to avoid directly comparing the sums
$S(\minima+1,x)$ and $S(P[y]+1,y)$.  This eliminates the need to store the
array $C$.  We then show how to encode the array $Q$ using $\Theta(n)$
bits.

\subsubsection{\label{sec:q-array}Left Siblings and the $Q$ Array:}

Given candidate $(P[x],x)$, we define the \emph{left sibling}
$\textsc{Left-Sib}((P[x],x))$ to be the largest index $\ell \in
[1,P[x] -1]$, such that there exists an $\ell' \in [\ell+1,P[x]]$ with
$S(\ell+1,\ell') > S(P[x]+1,x)$, if such an index exists.  Moreover,
when discussing $\ell'$ we assume that $\ell'$ is the smallest such
index.  If no such index $\ell$ exists, or if $(P[x],x)$ is a
non-candidate, we say $\textsc{Left-Sib}((P[x],x))$ is undefined.  We
define the array $Q$ such that $Q[x] = \textsc{Left-Sib}((P[x],x))$
for all $x \in [1,n]$; if $Q[x]$ is undefined, then we store the value
$0$ to denote this. We now prove that we can compare $S(\minima+1,x)$
to $S(P[y]+1,y)$ using the $Q$ array.

\begin{lemma}
\label{lem:case-analysis}
If $P[y] = y$ or $Q[y] \ge \minima$ then $\textsc{RMaxSSQ}(A,i,j) =
[\minima+1,x]$.  Otherwise, $\textsc{RMaxSSQ}(A,i,j) = [P[y]+1,y]$
\end{lemma}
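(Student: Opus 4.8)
The plan is to show that the sign of the comparison $S(\minima+1,x) > S(P[y]+1,y)$ made on line~\ref{step:sums} of Algorithm~\ref{alg:rmaxssq} is determined entirely by whether $P[y]=y$ or $Q[y] \ge \minima$, so that the $Q$ array can replace the explicit evaluation of the two sums. Throughout I work in the branch where $[P[x]+1,x] \not\subseteq [i,j]$, so $P[x] \le i-2$; recall that $x$ is a candidate with $D[x] > 0$, that $\minima = \textsc{RMinQ}(C,i-1,x-1) \in [i-1,x-1]$, and that item~(\ref{enum:max-x}) of Lemma~\ref{lem:cc-prop} gives $x < P[y] \le y$, while item~(\ref{enum:two-candidates}) guarantees that $\textsc{RMaxSSQ}(A,i,j)$ is one of $[\minima+1,x]$ and $[P[y]+1,y]$.

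First I would establish the preliminary fact that $S(\minima+1,x) > 0$. Since $P[x] = \textsc{Left-Min}(x) \ge \textsc{Left-Vis}(x)+1$ and $P[x] \le i-2$, we get $\textsc{Left-Vis}(x) \le i-3 < i-1$; hence every index in $[i-1,x-1]$ lies strictly to the right of $\textsc{Left-Vis}(x)$ and, by definition of the left visible region, has a $C$-value strictly below $C[x]$. In particular $C[\minima] < C[x]$, so $S(\minima+1,x) = C[x] - C[\minima] > 0$. This immediately settles the case $P[y] = y$: there $[P[y]+1,y]$ is empty with sum $0 < S(\minima+1,x)$, so by item~(\ref{enum:two-candidates}) the answer must be $[\minima+1,x]$.

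The heart of the argument is the equivalence $Q[y] \ge \minima \iff S(\minima+1,x) > S(P[y]+1,y)$ in the remaining case $P[y] < y$ (where $[P[y]+1,y]$ is a genuine candidate and $S(P[y]+1,y) = D[y] > 0$). The direction $(\Leftarrow)$ is easy: taking $\ell = \minima$ and $\ell' = x$ exhibits a pair witnessing the defining property of $\textsc{Left-Sib}((P[y],y))$, since $\minima \in [1,P[y]-1]$ and $x \in [\minima+1,P[y]]$ by the inequalities above, and $S(\minima+1,x) > S(P[y]+1,y)$ by assumption; as $Q[y]$ is the \emph{largest} such $\ell$, this forces $Q[y] \ge \minima$.

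The direction $(\Rightarrow)$ is the main obstacle, and the key idea is to avoid computing $S(\minima+1,x)$ directly by appealing to optimality. If $Q[y] \ge \minima$, then $\ell = Q[y]$ is itself a valid witness, so there is an $\ell' \le P[y]$ with $S(Q[y]+1,\ell') > S(P[y]+1,y)$. The segment $[Q[y]+1,\ell']$ lies inside $[i,j]$ because $Q[y]+1 > \minima \ge i-1$ and $\ell' \le P[y] \le y \le j$. Hence $\textsc{RMaxSSQ}(A,i,j)$ has sum at least $S(Q[y]+1,\ell') > S(P[y]+1,y)$, which rules out $[P[y]+1,y]$; by item~(\ref{enum:two-candidates}) the answer is therefore $[\minima+1,x]$, whence $S(\minima+1,x) > S(P[y]+1,y)$. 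Finally I would combine the pieces: when $Q[y] \ge \minima$ the equivalence and item~(\ref{enum:two-candidates}) give the answer $[\minima+1,x]$; when $Q[y] < \minima$ the equivalence yields $S(\minima+1,x) \le S(P[y]+1,y)$, and since $[P[y]+1,y]$ is the rightmost of the two candidates it wins under the stated tie-breaking rule, giving the answer $[P[y]+1,y]$. The undefined-value convention $Q[y]=0$ is consistent here, as $\minima \ge i-1 \ge 2 > 0$.
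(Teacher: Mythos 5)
Your proof is correct, and in the decisive case it takes a genuinely different route from the paper's. The skeleton is necessarily the same three-way case analysis, and two of your cases match the paper's: for $Q[y] < \minima$ you both invoke the maximality of $\ell$ in the definition of the left sibling and then the rightmost tie-breaking rule (your contrapositive formulation also absorbs the paper's separate remark about $Q[y]=0$), and for $P[y]=y$ you both conclude $[\minima+1,x]$, though your preliminary fact $S(\minima+1,x)>0$, derived from $\textsc{Left-Vis}(x) \le P[x]-1 \le i-3$, is a more careful justification than the paper's terse remark that $A[x]$ must be positive. The real divergence is the case $Q[y]\ge\minima$. The paper first rules out $Q[y]>x$ by a contradiction with $y=\textsc{RMaxQ}(D,x+1,j)$; then, letting $\ell'$ be the witness for $Q[y]$, it splits again: if $[Q[y],\ell']\subseteq[\minima,x]$ it deduces $S(P[y]+1,y) < S(Q[y]+1,\ell') \le S(\minima+1,x)$ by transitivity (tacitly using $C[\minima]\le C[Q[y]]$ and $C[\ell']\le C[x]$), while $\ell'>x$ is excluded by a second contradiction, $D[\ell']>D[x]$ against $x=\textsc{RMaxQ}(D,i,j)$. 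You replace all three sub-arguments with a single optimality observation: the witness segment $[Q[y]+1,\ell']$ lies inside $[i,j]$ (since $Q[y]+1\ge\minima+1\ge i$ and $\ell'\le P[y]\le y\le j$) and has sum exceeding $S(P[y]+1,y)$, so the maximum sum over $[i,j]$ already exceeds $S(P[y]+1,y)$, and item~(\ref{enum:two-candidates}) of Lemma~\ref{lem:cc-prop} then forces the answer to be $[\minima+1,x]$ regardless of where $Q[y]$ and $\ell'$ fall relative to $x$. This is simpler and arguably more robust, since it treats item~(\ref{enum:two-candidates}) as a black box and avoids the implicit endpoint comparisons the paper's transitivity step needs; what the paper's longer route buys is the structural by-products $Q[y]\le x$ and the explicit inequality between the two sums, which, however, are not used elsewhere.
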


\begin{proof}
There are several cases.  For this proof, $\ell'$ is defined relative
to $Q[y]$:
\begin{enumerate}
\item If $P[y] = y$ then $A[y]$ is non-positive.  However, since
  $[P[x]+1,x] \not\subseteq [i,j]$ it is implied that $A[x]$ is
  positive, thus $[\minima+1,x]$ is the correct answer.
\item If $Q[y] > x$ then there exists an $\ell'$ such that
  $S(Q[y]+1,\ell') > S(P[y]+1,y)$ with $\ell' < P[y]$.  Thus,
  $(Q[y],\ell')$ is a better candidate in the range $[x+1,j]$ than
  $(P[y],y)$, so $\textsc{RMaxQ}(D,i+1,j) = \ell'$ and we have a
  contradiction.  Hence, $Q[y] \le x$.
\item If $[Q[y],\ell'] \subseteq [\minima,x]$ then by transitivity we
  have $S(P[y]+1,y) < S(\minima+1,x)$. If $Q[y] \ge \minima$ and
  $\ell' > x$ then it implies that $S(x+1,\ell')$ is positive. This
  implies $D[\ell'] > D[x]$, which contradicts the fact that
  $\textsc{RMaxQ}(D,i,j) = x$. Thus, if $Q[y] \ge \minima$, then
  $[\minima+1,x]$ is the correct answer.
\item If $Q[y] < \minima$ then it is implied that the sum
  $S(\minima+1,x) \le S(P[y]+1,y)$ by the definition of left sibling,
  so $[P[y]+1,y]$ is the rightmost correct answer.  Note that if $Q[y]
  = 0$ and $P[y] \neq y$, then $S(P[y]+1,y)$ is larger than any
  subrange in $[1,P[y]-1]$, so this also holds.  \qed
\end{enumerate}
\end{proof}

Note that in the previous proof, we need not know the value of $\ell'$
in order to make the comparison: only the value $Q[y]$ is required.

\subsubsection{\label{sec:encoding}Encoding the $Q$ Array:}

\begin{figure}[t]\centering
\includegraphics[width=\textwidth]{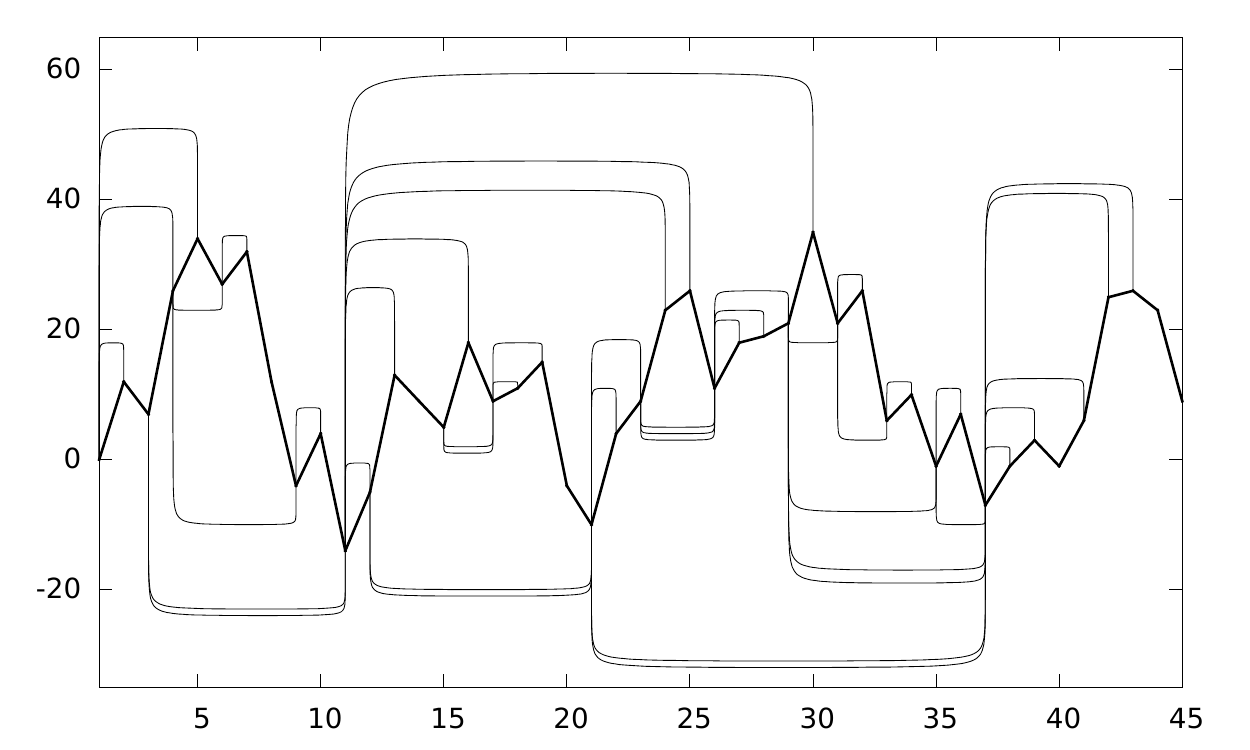}
\caption{\label{fig:left-sib-graph}A graph of the cumulative sums $C$
  (thick middle line) for a randomly generated instance with $n = 45$
  and floating point numbers drawn uniformly from the range $[-20,
    20]$. The $x$-axis is the number $i$, and the $y$-axis is $C[i]$.
  The edges drawn above the line for $C$ represent the candidate graph
  $G$ and the edges below represent the left sibling graph $H$: note
  that $H$ is a multigraph.}
\end{figure}

Unfortunately, the graph defined by the set of edges $(Q[x],x)$ does
not have the nesting property.  Instead, we construct an $n$-vertex
graph $H$ using the pairs $(Q[x],P[x])$ as edges, for each $x \in
[1,n]$ where $\textsc{Left-Sib}(x)$ is defined (i.e., $Q[x] \neq 0$).
We call $H$ the \emph{left sibling graph}.  We give an example
illustrating both the graphs $G$ and $H$ in
Figure~\ref{fig:left-sib-graph}.  Note in the figure that each edge in
$G$ has a corresponding edge in $H$ unless its left sibling is
undefined.  We formalize this intuition in the following lemma:

\begin{lemma} 
\label{lem:cw-order}
Let $(P[x],x)$ be a candidate, and suppose $i =
\textsc{Degree}(H,P[x]) - \textsc{Order}(G,P[x],x) + 1$. If $i > 0$
then it is the case that $\textsc{Left-Sib}((P[x],x)) =
\textsc{Neighbour}(H,P[x],i)$.  Otherwise,
$\textsc{Left-Sib}((P[x],x))$ is undefined.
\end{lemma}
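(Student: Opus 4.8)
The plan is to establish a bijection between the edges incident to a vertex $P[x]$ in the two graphs $G$ and $H$, and then show that the claimed arithmetic relating $\textsc{Degree}$, $\textsc{Order}$, and $\textsc{Neighbour}$ correctly picks out the left sibling. The key observation is that every candidate $(P[x],x)$ with the \emph{same} left endpoint $P[x]$ contributes one edge to $G$ (namely $(P[x],x)$, going rightward), and, when its left sibling is defined, one edge to $H$ (namely $(Q[x],P[x])$, arriving from the left). So at the vertex $P[x]$, the $G$-edges are exactly the candidates sharing that left endpoint, and the $H$-edges are exactly the left-sibling edges for those same candidates whose left sibling exists.

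First I would fix a vertex $v = P[x]$ and enumerate the candidates $(v, x_1), (v, x_2), \ldots$ all having $v$ as their $\textsc{Left-Min}$ value. I would argue that as the right endpoint $x_k$ increases, the corresponding sum $S(v+1, x_k)$ is strictly increasing (this should follow from the definitions of $\textsc{Left-Min}$ and candidacy, since a later candidate with the same left bound must improve on the earlier one; otherwise the earlier right endpoint would block it). Because $\textsc{Left-Sib}((v,x_k))$ is the largest index $\ell$ to the left of $v$ admitting a subrange sum exceeding $S(v+1,x_k)$, and this threshold is increasing in $k$, the left siblings $Q[x_k]$ are \emph{non-increasing} as $x_k$ increases: a stricter threshold can only push the left sibling farther left (or make it undefined). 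This monotonicity is what lets the two edge-orderings line up in reverse.

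Next I would match the sorted orders required by Lemma~\ref{lem:bp}. The operation $\textsc{Order}(G, v, x)$ returns the rank of the edge $(v,x)$ among $v$'s incident $G$-edges sorted by endpoint index; since all these endpoints lie to the right of $v$ and are the $x_k$'s in increasing order, $\textsc{Order}(G,v,x_k) = k$ (accounting for any edges to the left of $v$, which correspond to the case where $v$ is itself a right endpoint of some other candidate, but for a candidate $(v,x)$ we are only concerned with the rightward edges). Meanwhile the $H$-edges at $v$ are the pairs $(Q[x_k], v)$ whose other endpoint $Q[x_k]$ lies to the left of $v$; by the monotonicity just established these are sorted in \emph{increasing} endpoint order as $k$ \emph{decreases}. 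Hence $\textsc{Neighbour}(H, v, i)$ for $i$ counted from the high end of the degree corresponds to small $k$. Writing $d = \textsc{Degree}(H,v)$ for the number of candidates at $v$ with a defined left sibling, the index $i = d - \textsc{Order}(G,v,x) + 1 = d - k + 1$ precisely inverts the two orderings, so $\textsc{Neighbour}(H,v,i)$ returns $Q[x_k]$ exactly when $1 \le i \le d$, i.e. $i > 0$, and the left sibling is undefined otherwise.

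The main obstacle I anticipate is handling the boundary bookkeeping cleanly: the degree of $v$ in $H$ counts only those candidates at $v$ whose left sibling is \emph{defined}, whereas $\textsc{Order}(G,v,x)$ ranks among \emph{all} candidates at $v$ (including those with undefined left sibling). I need the monotonicity argument to guarantee that the candidates with defined left sibling are exactly the ones with the \emph{smallest} sums, i.e. the smallest $k$, so that they occupy a contiguous prefix of the $G$-ordering and the offset $d - k + 1$ lands in range precisely for those. In other words, once a threshold becomes so large that no left sibling exists, it remains so for all larger $k$, so "left sibling undefined" is an upward-closed condition on $k$. I would verify this monotonicity claim carefully, as the entire index arithmetic of the lemma rests on it; the remaining steps are then routine unwinding of the definitions of the navigation operations.
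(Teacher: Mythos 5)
Your proposal is correct and follows essentially the same route as the paper's own proof: fix $v=P[x]$, enumerate the candidates at $v$ in increasing order of right endpoint, show their sums strictly increase, conclude that the left siblings are non-increasing and that the candidates with defined left siblings form a prefix of this order, and then observe that the formula $i=\textsc{Degree}(H,v)-\textsc{Order}(G,v,x)+1$ simply inverts the two orderings. If anything, you are slightly more explicit than the paper on the one point it treats tersely---that ``left sibling undefined'' is an upward-closed condition, so the defined ones occupy a contiguous prefix of the $G$-ordering---which the paper dispatches with a brief counting remark about the edges of $H$.
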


\begin{proof}
Let $\minima = P[x]$ and consider the set of candidates $(\minima,x_1), ...,
(\minima,x_d)$, where $d = \textsc{Degree}(G,\minima)$, and $x_i =
\textsc{Neighbour}(G,\minima,i)$.  Then, we have $S(\minima+1,x_{i-1}) <
S(\minima+1,x_i)$ for all $1 < i \le d$, since $x_{i-1} < x_i$ and $\minima$ is
contained in the left visible region of all $x_i$.  Furthermore,
suppose $\ell_j = \textsc{Left-Sib}((\minima,x_j))$ for all $1 \le j \le
d'$, where $d' = \textsc{Degree}(H,\minima)$.  We show that $\ell_{j-1} \ge
\ell_j $, by assuming the opposite.  If $\ell_{j-1} < \ell_j$ then it
is implied by the definition of left sibling that $S(\minima+1,x_j) <
S(\minima+1,x_{j-1})$, which is a contradiction.  Thus, we have $\ell_{1}
\ge \ell_{2} \ge ... \ge \ell_{d'}$, and the remaining candidates
$(\minima,x_{d'+1}), ..., (\minima,x_{d})$ have undefined left siblings, as there
would be an edge in $H$ corresponding to them otherwise.  The
calculation in the statement of the lemma is equivalent to the
previous statement.  \qed
\end{proof}

Next, we prove the property that can be observed
in Figure~\ref{fig:left-sib-graph}: namely, that we can apply
Lemma~\ref{lem:bp} to $H$.

\begin{lemma}\label{lem:lsg}
The left-sibling graph $H$ can be represented using no more than $4n +
o(n)$ bits of space, such that given any $x \in [1,n]$ we can return
$Q[x]$ in constant time, assuming access to the data structure of
Lemma~\ref{lem:p-array}.
\end{lemma}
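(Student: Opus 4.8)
The plan is to show that the left-sibling graph $H$ satisfies the nesting property, so that Lemma~\ref{lem:bp} applies to it, yielding the claimed $4n + o(n)$ bit bound and constant-time navigation; then to combine navigation on $H$ with navigation on $G$ (via Lemma~\ref{lem:p-array}) to recover $Q[x]$ using the correspondence established in Lemma~\ref{lem:cw-order}.

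First I would prove the nesting property for $H$. An edge of $H$ has the form $(Q[x], P[x]) = (\textsc{Left-Sib}((P[x],x)), P[x])$. By definition of left sibling, $Q[x] \in [1, P[x]-1]$, so every edge $(Q[x], P[x])$ is a genuine range with left endpoint strictly less than its right endpoint; in particular $P[x]$ is the larger label. The goal is to show that for any two edges $(Q[x], P[x])$ and $(Q[x'], P[x'])$ of $H$, the corresponding ranges are either nested or disjoint — never properly crossing. I would argue this by contradiction: suppose two edges cross, say $Q[x] < Q[x'] < P[x] < P[x']$. The key is that the left-sibling witness $\ell'$ associated with $(Q[x],P[x])$ lies in $[Q[x]+1, P[x]]$ and gives a subrange $[Q[x]+1,\ell']$ whose sum exceeds $S(P[x]+1,x)$, while the analogous witness for the other edge lies in $[Q[x']+1, P[x']]$. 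I expect that combining these two sum inequalities — exploiting that one witness interval sits strictly inside the left-visible region governing the other candidate, together with the monotonicity/transitivity facts about cumulative sums already used in the proof of Lemma~\ref{lem:cw-order} — forces a contradiction with the maximality built into the definition of $\textsc{Left-Sib}$ (namely that $Q[x]$ is the \emph{largest} index with the witnessing property). Establishing this crossing-freeness is the main obstacle: it is the only genuinely new combinatorial fact, and it must be teased out carefully from the interplay between the definitions of left-visible region, left minimum, and left sibling, since unlike the candidate graph $G$ (where nesting followed directly from item~(\ref{enum:nested-p}) of Lemma~\ref{lem:cc-prop}) there is no ready-made statement to cite.

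Once crossing-freeness is in hand, $H$ is a one-page multigraph with $n$ vertices and at most $n$ edges (one per $x$ with $Q[x]\neq 0$), so Lemma~\ref{lem:bp} gives a representation in $2(n+m)+o(n+m) = 4n+o(n)$ bits, constructible in linear time, supporting $\textsc{Degree}$, $\textsc{Neighbour}$, and $\textsc{Order}$ in constant time. It remains to compute $Q[x]$. Given $x$, I would first use Lemma~\ref{lem:p-array} to obtain $P[x] = \textsc{Left-Min}(x)$ in $\Oh(1)$ time. If $P[x] = x$ then $(P[x],x)$ is a non-candidate and $Q[x]=0$. Otherwise I compute the order of the edge $(P[x],x)$ in $G$, namely $\textsc{Order}(G,P[x],x)$, and set $i = \textsc{Degree}(H,P[x]) - \textsc{Order}(G,P[x],x) + 1$, exactly as in Lemma~\ref{lem:cw-order}. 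If $i > 0$, then $Q[x] = \textsc{Left-Sib}((P[x],x)) = \textsc{Neighbour}(H,P[x],i)$; if $i \le 0$, the left sibling is undefined and $Q[x]=0$. Each of these is a constant number of constant-time operations on $G$ and $H$, so $Q[x]$ is returned in $\Oh(1)$ time. The correctness of this computation is precisely the content of Lemma~\ref{lem:cw-order}, and the nesting property established in the first step is what licenses storing $H$ in the data structure of Lemma~\ref{lem:bp}; together these complete the proof.
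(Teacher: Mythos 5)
Your proposal has the same architecture as the paper's proof --- establish the nesting property for $H$, store $H$ via Lemma~\ref{lem:bp} (at most $n$ edges on $n$ vertices gives $2(n+m)+o(n+m)\le 4n+o(n)$ bits), and answer $Q[x]$ by combining Lemma~\ref{lem:p-array} with the index formula of Lemma~\ref{lem:cw-order} --- and your handling of the storage and query steps is correct, indeed more explicit than the paper's own one-sentence treatment of them. The problem is the first step, which you yourself identify as ``the only genuinely new combinatorial fact'': it is the entire substance of the lemma, and you do not prove it. You only state that you \emph{expect} a contradiction to follow from combining the two left-sibling sum inequalities with the maximality in the definition of $\textsc{Left-Sib}$. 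That expectation, as sketched, does not go through, so this is a genuine gap rather than a compressed but complete argument.

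Concretely, write the crossing edges as $(\ell_1,t_1)$ and $(\ell_2,t_2)$ with $\ell_1<\ell_2<t_1<t_2$, where $t_k=P[x_k]$ and $\ell_k=Q[x_k]$. A contradiction cannot be extracted from the two sum inequalities alone, because the candidates' right endpoints $x_1,x_2$ are not vertices of either edge, and the argument changes qualitatively depending on where they fall; the paper's proof is a three-way case analysis on the position of $t_2$ relative to $x_1$. If $t_2=x_1$, the contradiction has nothing to do with sums: $(t_2,x_2)$ could not be a candidate at all, since the candidate ending at $x_2$ would then be $(t_1,x_2)$. If $t_2>x_1$, one first uses maximality of $\ell_2$ (with $t_1$ as a potential left-sibling index and $x_1\in[t_1+1,t_2]$ as its witness) to force $S(t_2+1,x_2)$ to be at least $S(t_1+1,x_1)$, and only then does $\ell_1<\ell_2$ violate maximality of $\ell_1$. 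If $t_2<x_1$, one needs the nesting of candidates (item~\ref{enum:nested-p} of Lemma~\ref{lem:cc-prop}) to get $x_2<x_1$, and then the facts that $(t_2,x_1)$ and $(t_1,x_2)$ are \emph{not} candidates to produce an intermediate index $x_3\in[t_1,t_2]$ with $C[x_3]\ge C[x_2]$, whence $S(t_1+1,x_3)>S(t_2+1,x_2)$ contradicts $\ell_2<t_1$. Note also that the one idea your sketch does offer --- transferring a witness interval from one edge to the other --- fails as stated: the witness $\ell'$ for $\ell_2$ lives in $[\ell_2+1,t_2]$ and may lie to the right of $t_1$, so it is not automatically a witness certifying a better left sibling for $(t_1,x_1)$; handling this is exactly what the positional case analysis is for. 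Without some version of that analysis the lemma remains unproved.
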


\begin{proof}
If $H$ is does not have the nesting property, then there exist
candidates $(\minima_1, x_1)$, $(\minima_2,x_2)$ with $\minima_1 < \minima_2$ such that
$\ell_1 < \ell_2 < \minima_1$, where $\ell_1 = \textsc{Left-Sib}((\minima_1,x_1))$
and $\ell_2 = \textsc{Left-Sib}((\minima_2,x_2))$. Proof by case analysis:

\begin{enumerate}
\item If $\minima_2 = x_1$ then $(\minima_2,x_2)$ is not a valid
  candidate: the candidate with right endpoint $x_2$ would be
  $(\minima_1,x_2)$.
\item If $\minima_2 > x_1$ then we have a contradiction because
  $S(\minima_2+1,x_2) > S(\minima_1+1,x_1)$ (since $\ell_2 <
  \minima_1$), and $\ell_1$ cannot be less than $\ell_2$.
\item If $\minima_2 < x_1$ then we have $x_2 < x_1$ because candidates
  are nested by Lemma~\ref{lem:cc-prop} (Item~\ref{enum:nested-p}).
  However, since $(\minima_2,x_1)$ is not a candidate,
  $C[\minima_2]>C[\minima_1]$, and since $(\minima_1,x_1)$ is a
  candidate we have $C[x_1] > C[x_2]$.  Thus, there must exist some
  $x_3 \in [\minima_1,\minima_2]$ such that $C[x_3] \ge C[x_2]$:
  otherwise $(\minima_1,x_2)$ would be a candidate, and not
  $(\minima_2,x_2)$.  This implies $S(\minima_1+1, x_3) >
  S(\minima_2+1,x_2)$, and therefore we have a contradiction since
  $\ell_2$ cannot be less than $\minima_1$.
\end{enumerate}

\noindent
Thus, we can apply Lemma~\ref{lem:bp} to $H$, achieving the desired
space bound.  Using the calculation in Lemma~\ref{lem:cw-order} we can
return $Q[x]$ in constant time for any candidate $(P[x],x)$.
\qed
\end{proof}

Thus, to simulate the query algorithm of Chen and Chao we need: the
range maximum structure for the array $D$ (Lemma~\ref{lem:rmq}); the
range minimum structure for the array $C$ (Lemma~\ref{lem:rmq}); the
representation of the graph $G$ (Lemma~\ref{lem:bp}); the
representation of the graph $H$ (Lemma~\ref{lem:bp}).  We have the
following theorem:

\begin{theorem}
\label{thm:main}
There is a data structure for supporting range maximum-sum segment
queries, $\textsc{RMaxSSQ}(A,i,j)$ for any $1 \le i \le j \le n$ in
constant time which occupies $12n + o(n)$ bits.
\end{theorem}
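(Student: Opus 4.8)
The plan is to assemble Theorem~\ref{thm:main} as a straightforward accounting of the four component data structures we have already built, verifying that together they let us execute Algorithm~\ref{alg:rmaxssq} in constant time without ever touching the arrays $C$, $D$, or $P$ explicitly. First I would list the four pieces and their costs: the range-maximum structure on $D$ and the range-minimum structure on $C$, each costing $2n + o(n)$ bits by Lemma~\ref{lem:rmq}; the candidate graph $G$ via Lemma~\ref{lem:p-array}, costing $4n + o(n)$ bits; and the left-sibling graph $H$ via Lemma~\ref{lem:lsg}, costing $4n + o(n)$ bits. Summing these gives $2n + 2n + 4n + 4n + o(n) = 12n + o(n)$ bits, matching the claimed bound.

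The substantive part of the argument is to re-examine each line of Algorithm~\ref{alg:rmaxssq} and confirm that it can be simulated within this encoding, since the original algorithm freely reads entries of $P$ and computes sums $S(\cdot,\cdot)$ from $C$. For the lines invoking $\textsc{RMaxQ}(D,\cdot,\cdot)$ and $\textsc{RMinQ}(C,\cdot,\cdot)$, the two structures from Lemma~\ref{lem:rmq} answer directly in $\Oh(1)$ time. Every access $P[x] = \textsc{Left-Min}(x)$ is replaced by the query of Lemma~\ref{lem:p-array}, again in constant time. The genuinely delicate step is line~\ref{step:sums}, where Chen and Chao compare $S(\minima+1,x)$ with $S(P[y]+1,y)$ using the explicit array $C$; here I would invoke Lemma~\ref{lem:case-analysis}, which shows this comparison can be resolved using only the predicate ``$P[y] = y$ or $Q[y] \ge \minima$.'' The value $P[y]$ comes from Lemma~\ref{lem:p-array} and the value $Q[y]$ from Lemma~\ref{lem:lsg}, both in $\Oh(1)$ time, so line~\ref{step:sums} is simulated without storing $C$.

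I expect the main obstacle to be purely expository rather than mathematical: I must argue carefully that the correctness of the simulated algorithm follows from combining the Chen–Chao correctness guarantees (Lemma~\ref{lem:cc-prop}, items~\ref{enum:candidate-fits},~\ref{enum:max-x}, and~\ref{enum:two-candidates}) with our replacement of the sum comparison by Lemma~\ref{lem:case-analysis}. Concretely, the branch structure of Algorithm~\ref{alg:rmaxssq} is preserved: the $P[x] = x$ and $P[x]+1 \ge i$ tests are evaluated via Lemma~\ref{lem:p-array}, while the final branch returns either $[\minima+1,x]$ or $[P[y]+1,y]$ exactly as Lemma~\ref{lem:case-analysis} dictates. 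One point worth flagging is that Lemma~\ref{lem:lsg} assumes access to the structure of Lemma~\ref{lem:p-array}, which is included in our $12n$ count, so there is no circular dependency. Once all four lookups are shown to be constant time and the branching logic is shown to match, correctness and the constant query time follow immediately, and the space bound is the sum computed above, completing the proof.
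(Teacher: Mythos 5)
Your proposal is correct and matches the paper's own argument: the paper likewise obtains $12n + o(n)$ bits by summing the two $2n + o(n)$-bit range max/min structures on $D$ and $C$ (Lemma~\ref{lem:rmq}) with the $4n + o(n)$-bit representations of the graphs $G$ and $H$ (Lemmas~\ref{lem:p-array} and~\ref{lem:lsg}), and answers queries by simulating Algorithm~\ref{alg:rmaxssq}, replacing the explicit sum comparison on line~\ref{step:sums} with the $Q$-based test of Lemma~\ref{lem:case-analysis}. Your additional observation that the dependence of Lemma~\ref{lem:lsg} on Lemma~\ref{lem:p-array} is not circular is a nice touch but does not change the substance.
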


\begin{remark}
We note that the constant factor of $12$ in Theorem~\ref{thm:main} is
suboptimal.  As Rajeev Raman~\cite{RRPersonal} has pointed out, the
space for Lemma~\ref{lem:p-array} can be reduced to $2n + o(n)$ bits.
Furthermore, we have also noted that an additional $n$ bits can be
saved by combining the range maximum and minimum encodings for $D$ and
$C$.  However, both of these improvements are quite technical and we
suspect the optimal constant factor is much lower than $9$.  As such,
we leave determination of this optimal constant as future work.
\end{remark}

\section{\label{sec:construction}Constructing the \texorpdfstring{$P$}{P} and \texorpdfstring{$Q$}{Q} Arrays}

Here we show that the arrays $P$ and $Q$ can be constructed
in $\Oh(n)$ time, and therefore the construction time of the data
structure from Theorem~\ref{thm:main} is $\Oh(n)$.  This follows since
we can construct the adjacency list representation of the graphs $G$
and $H$ in linear time from the arrays $P$ and $Q$, respectively.

Constructing the $P$ array is rather straightforward. We only need to
efficiently compute $\textsc{Left-Vis}(i)$ for every $i$. This can be done
using a simple stack-based procedure. Informally, we consider $i=1,2,\ldots,n$
and maintain a stack, where we keep $\textsc{Left-Vis}(i)$,
$\textsc{Left-Vis}(\textsc{Left-Vis}(i))$,
and so on, see Algorithm~\ref{alg:P}. More formally, we define the
\emph{falling staircase}
of a sequence $(a_{1},\ldots,a_{k})$ to be the maximal sequence of
indices $(i_{1},\ldots,i_{s})$ such that
$i_{s}=k$ and $i_{j}=\max\{x :x<i_{j+1}\text{ and
}a_{x}\geq a_{i_{j+1}}\}$ for $j=s-1,\ldots,1$.
Then $\textsc{Left-Vis}(i)$ is the next-to-last element of the falling staircase
of $(C[1],\ldots,C[i])$, and it is easy to see that
Algorithm~\ref{alg:P} maintain such
falling staircase. Having all $\textsc{Left-Vis}(i)$, we can compute
all $\textsc{Left-Min}(i)$
in $\Oh(n)$ total time with range minimum queries (by Lemma~\ref{lem:rmq},
a range minimum structure can be constructed
in $\Oh(n)$ time and answers any query in $\Oh(1)$ time).

\begin{algorithm}
\caption{Computing all $\textsc{Left-Vis}(i)$.}
\label{alg:P}
\begin{algorithmic}
\State $S\gets \emptyset$
\For{$i=1,2,3\ldots,n$}
  \While{$S\neq\emptyset$ and $C[S.\mathrm{top}]< C[i]$}
    \State $S.\mathrm{pop}()$
  \EndWhile
  \State $\textsc{Left-Vis}(i) \gets S.\mathrm{top}$
  \State $S.\mathrm{push}(i)$
\EndFor
\end{algorithmic}
\end{algorithm}

Computing the $Q$ array is more involved. Let $(i_{1},\ldots,i_{s})$
be the falling staircase of $(C[1],\ldots,C[i])$. It partitions
$[1,n]$ into \emph{ranges}
$[1,i_{1}],[i_{1}+1,i_{2}],\ldots,[i_{s-1}+1],i_{s}]$ For each of
these ranges we store the \emph{rising staircase} of the
corresponding range of the $C$ array, where the rising staircase
of $(a_{1},\ldots,a_{k})$ is the falling staircase of
$(-a_{1},\ldots,-a_{k})$. Each of these rising staircases is stored
on a stack, so we keep a stack of stacks, see
Figure~\ref{fig:stacks}. Observe that the rising staircases allow us to extract
$P[i]$ without using the range minimum structure by simply
returning the leftmost element of the last rising staircase.
This is because the sequence we constructed the last rising staircase
for is exactly $C[(\textsc{Left-Vis}(i)+1]),i]$, and the leftmost
element of a rising staircase is the smallest element of the sequence.

\begin{figure}[t]
\includegraphics[width=\textwidth]{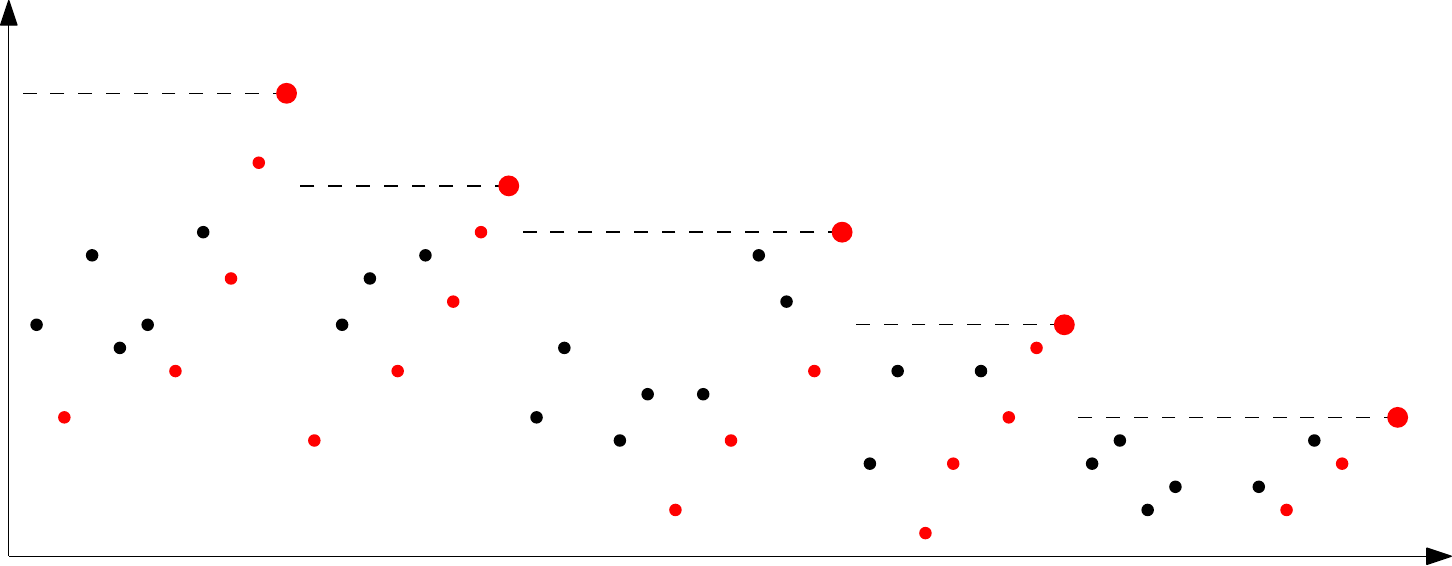}
\caption{A schematic depiction of the stack of rising staircases. Each rising
staircase is in red, and the falling staircase consists of the larger
red elements.}
\label{fig:stacks}
\end{figure}

Before we argue that the rising staircases allow us to compute every $Q[i]$,
we need to show that they can be maintained efficiently as we consider $i=1,\ldots,n$.

\begin{lemma}
\label{lem:rising construction}
The rising staircases constructed for $C[1,i]$ can be updated to be
the rising staircases constructed for $C[1,(i+1)]$ in amortized
$\Oh(1)$ time.
\end{lemma}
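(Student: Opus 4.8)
The plan is to run the stack of rising staircases in lockstep with the falling staircase of Algorithm~\ref{alg:P}. Let $(i_1,\ldots,i_s)$ be the falling staircase of $(C[1],\ldots,C[i])$, with $i_s=i$, inducing the ranges $\rho_j=[i_{j-1}+1,i_j]$ (with $i_0=0$), and suppose each rising staircase $\mathrm{RS}(\rho_j)$ is held as its own stack inside the stack of stacks. When we pass from $C[1,i]$ to $C[1,i+1]$, the falling-staircase update pops exactly the elements $i_{k+1},\ldots,i_s$ whose $C$-value is smaller than $C[i+1]$, leaving $i_k=\textsc{Left-Vis}(i+1)$ on top, and then pushes $i+1$. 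Correspondingly, the ranges $\rho_{k+1},\ldots,\rho_s$ are swallowed into a single new range $\rho=[i_k+1,i+1]$. So the task reduces to: given the rising staircases $\mathrm{RS}(\rho_{k+1}),\ldots,\mathrm{RS}(\rho_s)$ currently on top of the stack of stacks, produce $\mathrm{RS}(\rho)$ at a cost proportional to the number of elements discarded.

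The key step I would establish first is a merge rule for two adjacent ranges. Viewing the rising staircase of a range as the chain obtained by starting at its rightmost index and repeatedly stepping to the nearest index on the left whose $C$-value is no larger, I claim that for adjacent ranges $R$ (left) and $R'$ (right), $\mathrm{RS}(R\cup R')$ equals the prefix of $\mathrm{RS}(R)$ consisting of those elements whose $C$-value is at most $\min_{x\in R'}C[x]$, followed by all of $\mathrm{RS}(R')$. This is immediate from the nearest-smaller-or-equal characterization: inside $R'$ the chain is unchanged, and once it reaches the leftmost (hence minimum) element of $R'$ it must continue into $R$ at the nearest index whose $C$-value is at most that minimum, which is exactly where the retained prefix of $\mathrm{RS}(R)$ ends. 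Since the elements of $\mathrm{RS}(R)$ increase in $C$-value from the bottom to the top of its stack, this \emph{prefix} is precisely what remains after popping a top segment, so the merge is a sequence of pops on the left stack followed by one concatenation.

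Applying this rule while popping $\rho_s,\rho_{s-1},\ldots,\rho_{k+1}$ one at a time and accumulating them from right to left into a single staircase builds $\mathrm{RS}([i_k+1,i])$; appending $i+1$ is then a single push, because $\textsc{Left-Vis}(i+1)=i_k$ forces every $C$-value in $[i_k+1,i]$ to be strictly below $C[i+1]$, so nothing is discarded and $i+1$ simply becomes the new top. To keep each concatenation at $\Oh(1)$ cost I would store every rising staircase as a singly linked stack, so that splicing two of them together is a constant-time pointer operation.

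The amortized bound then follows by charging. Every index is pushed onto a rising staircase exactly once, namely at the step where it plays the role of $i+1$; the merges themselves never push, so there are at most $n$ pushes over the whole run. An element discarded during a merge has $C$-value exceeding the minimum of some range lying to its right, and therefore can never reappear on any rising staircase, so the number of pops is also at most $n$. Finally, the number of concatenations is bounded by the number of elements popped from the falling staircase over the entire run, which is $\Oh(n)$ by the usual analysis of Algorithm~\ref{alg:P}. Summing, the total work across all $i$ is $\Oh(n)$, giving amortized $\Oh(1)$ per step. The main obstacle is pinning down the merge rule and its orientation precisely, and then arguing that a discarded element never returns: that irreversibility is what powers the amortization, while the remaining steps are bookkeeping.
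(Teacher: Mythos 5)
Your proposal is correct and takes essentially the same route as the paper: both run the stack of rising staircases in lockstep with the falling staircase, rebuild the staircase of the newly created range by pairwise pop-and-concatenate merges of the staircases of the popped ranges, and amortize by observing that every popped element disappears forever and each merge corresponds to one falling-staircase pop. If anything, your merge rule is stated more carefully than the paper's one-sentence description (which reads as if one pops from the \emph{latter} staircase, whereas the discarded elements are exactly the suffix of the \emph{left} staircase whose values exceed the right range's minimum, as you argue), so your version pins down a detail the paper leaves loose.
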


\begin{proof}
Consider the rising staircases constructed for $C[1..i]$, and denote
the current ranges by
$[1,i_{1}],[i_{1}+1,i_{2}],\ldots,[i_{s-1}+1,i_{s}]$.  We first find
largest $s'$ such that $C[i_{s'}]>C[i+1]$. Then we pop all
$[i_{s'}+1,i_{s'+1}],\ldots,[i_{s-1}+1,i_{s}]$ from the stack and push
$[i_{s'}+1,i+1]$. The remaining part is to construct the rising
staircase corresponding to the new range $[i_{s'}+1,i+1]$. It can be
constructed using the rising staircases corresponding to the removed
ranges. More precisely, given the rising staircases of
$C[i_{j}+1,i_{j}]$ and $C[i_{j+1}+1..i_{j+2}]$, both stored on stacks,
the rising staircase of $C[i_{j}+1..i_{j+2}]$ can be constructed in
$\Oh(1)$ amortized time.  We call this \emph{merging} the rising
staircases.  It can be implemented by popping the elements from the
latter rising staircase as long as they are smaller than the first
element on the former rising staircase (so the stacks should give us
access to their top elements, but this is easy to add). Every popped
element disappears forever, hence the whole merging procedure
amortizes to $\Oh(1)$ time.  \qed
\end{proof}

Now we are ready show how to compute $Q[i]$ given the rising staircases.
Recall that $Q[i]$ is the largest $\ell < P[i]$ such that there exists
$\ell'\in [\ell,P[i]-1]$ for which $S(\ell+1,\ell')>S(P[i+1]+1,i+1)$.
In other words, we want to find the largest $\ell < P[i]$ such that
$C[\ell']-C[\ell] > C[i]-C[P[i]]$ for some $\ell'\in [\ell,P[i]-1]$.
Because $P[i]$ is the leftmost element of the last rising staircase.
$\ell$ must belong to one of the earlier staircases. Then, because
the rightmost elements on the rising staircases are nonincreasing,
it is enough to consider $\ell$ and $\ell'$ belonging to the same
rising staircase, and furthermore without losing the generality
$\ell'$ is the rightmost element there. So, to summarize, we want
to find the largest $\ell$ belonging to one of the rising staircases
(but not the last one), such that $C[\ell']-C[\ell] > C[i]-C[P[i]]$,
where $\ell'$ is the rightmost element of the same rising staircase.
We will first show how to determine $\ell'$, i.e., the relevant
staircase, and then the rightmost possible $\ell$ there.

We define the \emph{span} of a rising staircase to be the difference
between its leftmost and rightmost element (which, by definition,
is the same as the difference between its smallest and largest element).
We maintain the falling staircase of the sequence of spans of all
rising staircases. In other words, we store the rising staircase
with the largest span, then the rising staircase with the largest span
on its right, and so on. By definition, the span of a rising staircase
is equal to the largest possible value of $C[\ell']-C[\ell]$, where
$\ell$ and $\ell'$ belong to that rising staircase. Therefore, to
determine $\ell'$ we only need to retrieve the rising staircase
corresponding to the next-to-last element of the falling staircase
of the sequence of spans, which can be done in $\Oh(1)$ time,
assuming that we can maintain that falling staircase efficiently.

\begin{lemma}
\label{lem:falling maintenance}
The falling staircase of the sequence of spans of all rising staircases
can be updated in $\Oh(1)$ time after merging the two rightmost
rising staircases.
\end{lemma}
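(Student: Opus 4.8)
The plan is to maintain the falling staircase of spans as a monotone stack that runs in parallel to the stack of rising staircases, exactly in the spirit of Algorithm~\ref{alg:P} and Lemma~\ref{lem:rising construction}. Alongside each rising staircase I store its span, and I keep a second stack holding the spans that currently form the falling staircase, so that from bottom to top the stored spans are strictly decreasing and the top entry is the span of the rightmost rising staircase. Merging the two rightmost rising staircases $R'$ and $R''$ into a single staircase $R$ changes the underlying sequence of spans only at its tail: the two spans $\sigma',\sigma''$ are deleted and replaced by the single span $\sigma$ of $R$. Hence only the suffix of the falling staircase of spans can change, and it suffices to show how to repair the top of the span stack in $\Oh(1)$ amortized time.

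The span $\sigma$ of the merged staircase is computable in $\Oh(1)$: the leftmost (smallest) element of $R$ is the smaller of the two leftmost elements of $R'$ and $R''$, and the rightmost (largest) element of $R$ is the larger of the two rightmost elements, both of which the stacks expose in constant time. Writing $X$ for the operand whose rightmost element becomes the rightmost element of $R$, we have $\sigma \ge \sigma_X$, since $R$ has the same rightmost element as $X$ but a leftmost element no larger than that of $X$; in particular $\sigma$ is at least as large as the span of the rightmost of the two merged staircases, which is exactly the current top of the span stack. The repair is therefore a single monotone-stack step: pop the at most two span entries belonging to $R'$ and $R''$ (the rightmost staircase's span is the current top, and the second staircase's span is either immediately below it or already absent because it was dominated), then push $\sigma$, first popping every entry that is strictly smaller than $\sigma$. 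Correctness is immediate from the observation that the rest of the span sequence, and hence the part of the falling staircase below the repaired region, is untouched.

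The only remaining task is to bound the number of stack operations, and this is where the argument is delicate. As in Lemma~\ref{lem:rising construction}, the bound will be \emph{amortized}: I charge each push to a distinct event and argue that an entry, once removed from the span stack, is never pushed again. An entry leaves in one of two ways: its staircase is consumed by a merge, which is permanent because merged staircases never reappear; or it is dominated by a larger span pushed to its right. The hard part is showing that the second kind of removal is also permanent, i.e.\ that a dominated span is never later un-dominated and so never reinserted. I expect this to follow from two monotonicity facts about the stack of rising staircases: the rightmost elements of the staircases are non-increasing from left to right, and every merge that rebuilds the region to the right of a surviving staircase does so over a range that only grows and whose minimum only drops, so any span later created to the right of a surviving staircase is at least as large as the span that originally dominated the entry. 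Granting this, the total number of pushes is bounded by the number of rising staircases ever created plus the number of merges, both of which are $\Oh(n)$ over the whole construction by Lemma~\ref{lem:rising construction}; hence the total number of span-stack operations is $\Oh(n)$, and each merge costs $\Oh(1)$ amortized time. Establishing that ``dominated stays dominated forever'' is the main obstacle, since it is precisely what rules out the cascade of reinsertions that a naive analysis would allow.
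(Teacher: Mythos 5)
Your overall plan---a monotone stack of spans maintained in parallel, updated by popping the entries of the two merged staircases and pushing the new span---is the same as the paper's, but your justification contains a false step. The rightmost element of the merged staircase $R$ is the rightmost element of the \emph{latter} (right) operand $R''$, not the larger of the two rightmost elements: a rising staircase must end at the last position of its underlying range, so if $R'$ ends higher than $R''$, the top of $R'$ simply vanishes in the merge. This is not a corner case: by your own monotonicity fact, the rightmost elements (the range maxima, i.e.\ the falling staircase of $C$) are non-increasing from left to right, so among two adjacent popped ranges the \emph{former} is the one that can end higher. Concretely, merging the staircases of consecutive $C$-ranges $(0,10)$ and $(5,6)$ yields $(0,5,6)$, whose span $6$ is smaller than the span $10$ of the former staircase. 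Hence your formula for $\sigma$, and the inequality $\sigma\ge\max(\sigma',\sigma'')$ that your update implicitly relies on, are false for merges of raw adjacent ranges; they hold only because, in the algorithm, the rightmost staircase in every merge already contains the newly arrived element $C[i+1]$, which is at least as large as the maximum of every popped range. The paper's proof is exactly the clean form of this: the merged staircase ends where the latter one ends and its minimum is the smaller of the two minima, so (in that merge order) the largest element is unchanged, the smallest can only drop, and therefore $\sigma\ge\max(\sigma',\sigma'')$.

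The second problem is the step you explicitly defer, and it is mislabelled: ``dominated stays dominated'' is not an amortization issue but a \emph{correctness} issue. Entries dominated by $\sigma'$ or $\sigma''$ are not on your stack at all, and your procedure never reinserts anything; so if $\sigma<\max(\sigma',\sigma'')$ were possible, then after popping two entries and pushing $\sigma$ the stack would simply fail to be the falling staircase of the new span sequence (e.g.\ with spans $(5,10,3)$ and a merged span of $4$, the correct staircase is $(5,4)$ but your stack holds only $(4)$)---so your claim that ``correctness is immediate because the rest is untouched'' is precisely where the argument breaks. The single inequality $\sigma\ge\max(\sigma',\sigma'')$, properly justified via the merge order, resolves everything at once: every prefix entry needed after the update (those $\ge\sigma$) is still on the stack, dominated entries remain dominated forever, and the amortization is then trivial, since each merge performs exactly one push and every pop is final---no ``delicate'' analysis and no unproven monotonicity facts are needed. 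As written, your proposal has a garbled version of this inequality resting on a false description of the merge, and defers exactly the step that the inequality would close, so it has a genuine gap relative to the paper's two-sentence argument.
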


\begin{proof}
Consider the two rightmost rising staircases. After merging, the
rightmost element of the resulting rising staircase is the rightmost
element of the latter rising staircase. The leftmost element of the
resulting rising staircase is either leftmost element of the former
or the latter rising staircase, depending on which one is smaller.
Therefore, the largest element stays the same, and the smallest
element stays the same or decreases, so the span of the new
rising staircase cannot be smaller than the spans of the initial
rising staircases.

Now the falling staircase of the sequence of spans of all rising
staircases can be updated by first popping its elements corresponding
to the two rightmost rising staircases, and then including the
span of the new rising staircase, which might require popping
more elements. Because the new span is at least as large as the
spans of the removed elements, this maintains the falling staircase
correctly in $\Oh(1)$ amortized time.
\qed
\end{proof}

After having determined the appropriate rising staircase, such that
$\ell'$ is the rightmost element there, we want to determine $\ell$.
Denoting the rising staircase by  $(i'_{1},\ldots,i'_{s'})$, where $i'_{s'}=\ell'$,
we need to determine the largest $j$ such that $C[i'_{s'}]-C[i'_{j}] >
C[i]-C[P[i]]$.
This can be done by starting with $j=s'$ and decrementing $j$ as long as
$C[i'_{s'}]-C[i'_{j}] \leq C[i]-C[P[i]]$, i.e., scanning the rising staircase
from right to left. A single scan might require a lot of time, but one can
observe that all scanned elements can be actually removed from the rising
staircase. This is because the next time we scan the same rising staircase
again, the value of $C[i]-C[P[i]]$ will be at least as large as now. When
the rising staircase (or more precisely its prefix) becomes a part of a longer
rising staircase, the scanned elements will be outside of the surviving prefix,
therefore they can be safely removed. This reduces the amortized
complexity of determining a single $\ell$ to $\Oh(1)$, and gives the claimed
total linear time to determine the whole $Q$ array.


\section{\label{sec:lowerbound}Lower Bound}

In this section we prove a lower bound by showing that range maximum
segment sum queries can be used to construct a combinatorial object
which we call \emph{maximum-sum segment trees}, or \emph{MSS-trees}
for short.  By enumerating the total number of distinct MSS-trees, we
get a lower bound on the number of bits required to encode a data
structure that supports range maximum segment sum queries.

\subsection{MSS-trees}

We define MSS-trees as follows.  An MSS-tree for an array $A[1,n]$ is
a rooted ordinal tree, i.e., a rooted tree in which the children are
ordered.  Each node is labelled with a range $[i,j] \subseteq [1,n]$.
For technical reasons, as in the previous sections, we assume $A[1] =
0$.  The intuition is as follows.  Suppose we execute the query
$\textsc{RMaxSSQ}(A,1,n)$ and are given a range $[i+1,j]$.  We define
the \emph{drop} of a query result $[i+1,j]$ to be the range
$[i,j]$---i.e., a range with the left endpoint of the query result
extended by one---as we find it more convenient to discuss drops
rather than query results.  Thus, since $A[1]=0$, all possible drops
span at least two array locations, with the exception of the empty
range, whose drop will be defined to be the empty range. Next, we
consider the partial sums (i.e., the $C$ array), and how we can force
certain drops to occur.  To get a drop of $[i,j]$ we simply fix $C[i]$
to be the minimum, and $C[j]$ to be the maximum.  Then, by fixing
other partials sums we have (roughly) the following flexibility when
setting the values of additional drops in $A$:

\begin{enumerate}

\item \label{enum:left-sit}The drop of $\textsc{RMaxSSQ}(A,1,i-1)$ can
  be completely arbitrary in the range $[1,i-1]$, of length at least
  two, or zero.  Note that it is important that $A[1] = 0$ to make
  this statement true.  Furthermore, we maintain the invariant that
  the values in array locations $C[1]$, $\ldots$, $C[i-1]$ are
  restricted to the range $(C[i],C[j])$, and that the minimum of these
  values occurs to the left of the maximum.

\item \label{enum:right-sit}The drop of $\textsc{RMaxSSQ}(A,j+1,n)$
  can be completely arbitrary in the range $[j+1,n]$, of length at
  least two or zero.  This follows since we know that $A[j+1]$ is a
  non-positive number, since $j+1$ was not contained in $[i,j] =
  \textsc{RMaxSSQ}(A,1,n)$. As in the previous case, we maintain the
  invariant that the values $C[j+1]$, $\ldots$, $[n]$ are restricted to the
  range $(C[i],C[j])$, and that the minimum of these values occurs to
  the left of the maximum.

\item \label{enum:middle-sit}The drop of $\textsc{RMaxSSQ}(A,i,j-1)$
  can be almost completely arbitrary in the range $[i,j-1]$, with
  length at least two. The difference between this case and the
  previous is that the empty range cannot be returned as a drop, nor
  can a drop with left index $i+1$.  This can be seen since
  $\textsc{RMaxSSQ}(A,i,i+1)$ has a drop $[i,i+1]$, as $A[i+1]$ must
  be positive: otherwise, $A[i+1]$ would not be included as the left
  index of the query result for $\textsc{RMaxSSQ}(A,1,n)$ as it does
  not increase the score. Finally, we maintain the invariant that the
  values in array locations $C[i+1]$, $\ldots$, $C[j-1]$ are
  restricted to the range $(C[i],C[j])$, and that the minimum of these
  values occurs to the left of the maximum.

\end{enumerate}

The previous three situations are a bit vague about border cases, and
we will clarify this in our later discussion. Because of these cases,
our MSS-trees will in two flavours: either \emph{general}, which will
describe situations \ref{enum:left-sit} and~\ref{enum:right-sit}, or
\emph{restricted}, which describes situation~\ref{enum:middle-sit}.
General MSS-trees may contain subtrees which are restricted, and
restricted MSS-trees may contain subtrees which are general.  In light
of this mutual definition, we define general MSS-trees first, followed
by restricted MSS-trees.

\subsubsection{General MSS-trees}

Given the array $A$, and a range $[i_0,j_0]$, we construct a general
tree in the following way.  If $i_0=j_0$, then we return a single node
labelled $[i_0,j_0]$.  This is consistent with the fact that we have
enforced the range to begin with a non-positive number for the general
case by setting $A[1]=0$.  Thus, there is only one possible type of
tree when the range has length $1$.  If the range $[i_0,j_0]$ is not
valid (for instance if $i_0 > j_0$), then we return an empty
tree. Otherwise, we execute $\textsc{RMaxSSQ}(A,i_0,j_0)$ and are
given a drop $[i,j] \subseteq [i_0,j_0]$.  We then create a node
labelled with the range $[i,j]$.  The node will have three children
(all of which are possibly empty trees):

\begin{itemize}
\item The left child is a general MSS-tree constructed on the range
  $[i_0,i-1]$.
\item The middle child is a restricted MSS-tree constructed on the
  range $[i,j-1]$.
\item The right child is a general MSS-tree constructed on the range
  $[j+1,j_0]$.
\end{itemize}

\subsubsection{Restricted MSS-tree}

Given a subrange $[i_0,j_0]$, we construct a restricted MSS-tree as
follows.  If $i_0 = j_0$ then we return an empty tree.  If $i_0 =
j_0-1$, then we return a tree labelled with $[i_0,j_0]$.  Otherwise,
if we execute $\textsc{RMaxSSQ}(A,i_0,j_0)$, then we will be given a
drop $[i_0,j] \subseteq [i_0,j_0]$.  This follows from the invariants
since we know that $C[i_0] < C[k]$ for any $k \in [i_0+1,j_0]$.  The
root of the restricted tree is labelled with $[i_0,j]$, and has two
children (again, possibly both empty subtrees):

\begin{enumerate}
\item The left child is the result of recursively constructing a
  restricted MSS-tree on the range $[i_0,j-1]$.
\item The right child is the result of recursively constructing a
  general MSS-tree on the range $[j+1,j_0]$.
\end{enumerate}

\subsection{Examples}

Given any data structure that answers $\textsc{RMaxSSQ}$ queries on
$A$, we can construct the MSS-tree for $A$ by invoking the
construction algorithm for general MSS-trees on the range $[1,n]$.  As
an example, we give a figure showing all possible MSS-trees for $n=3$
and $n=4$ in Figure~\ref{fig:mss-trees}.  Using the invariants
described above, it is not difficult to construct arrays of lengths 3
and 4---in which $A[1] = 0$---such each of the MSS-trees in the figure
can be extracted by the procedure described above.

\begin{figure}
\centering
\includegraphics[width=\textwidth]{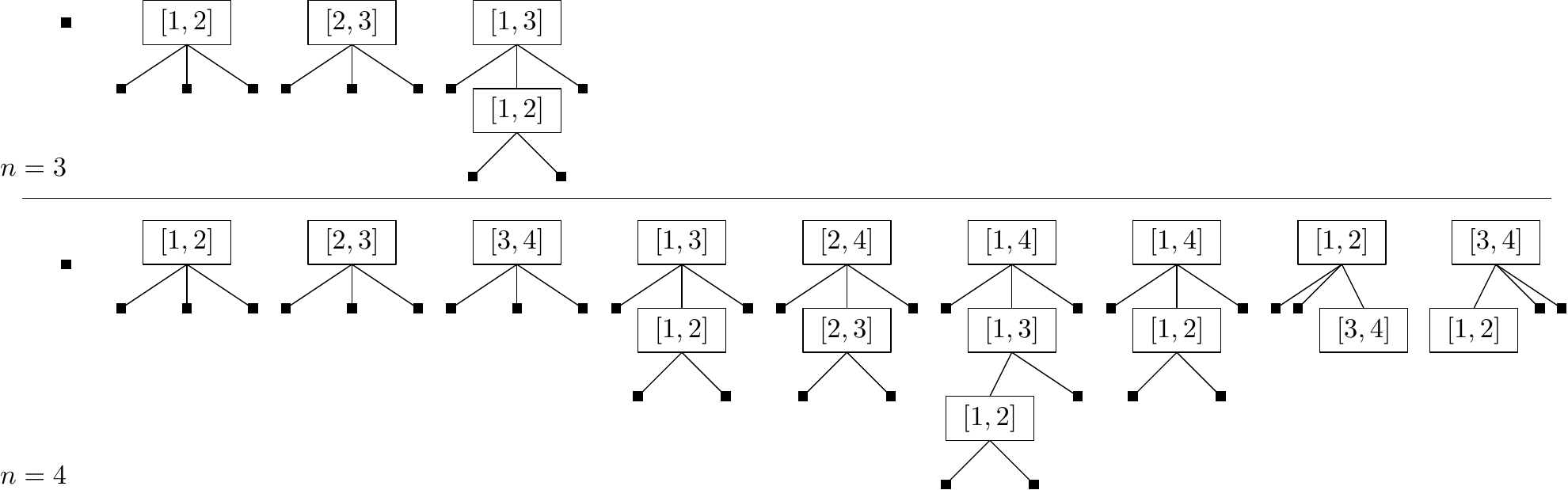}
\caption{\label{fig:mss-trees}Every possible MSS-tree for $n=3$ and
  $n=4$.  Each node is a box labelled with its respective range.  We
  use black boxes to denote empty subtrees.}
\end{figure}

\subsection{Enumeration via Recurrences}

Based on the discussion in the previous section we write the following
recurrences to count the number of MSS-trees for an array of length
$n$.  Let $T(n)$ denote the number of general MSS-trees on an array of
length $n$, and $M(n)$ denote the number of restricted trees on an
array of length $n$.

\begin{align}
T(0) &= 1 &\notag\\
T(n) &= 1+ \sum_{j=2}^{n} \sum_{i=1}^{j-1} T(i-1) \cdot M(j-i) \cdot T(n-j)  & \textrm{ for } n \geq 1 \label{eqn:T} \\
M(0) &= 0 &\notag\\
M(1) &= 1 &\notag\\
M(n) &= \sum_{i=2}^{n} M(i-1) \cdot T(n-i) & \textrm{ for } n \geq 2 \label{eqn:M}
\end{align}
We rewrite (\ref{eqn:T}) as follows:
\begin{align*}
T(n) &= 1 + \sum_{j=2}^{n} \sum_{i=1}^{j-1} T(i-1) \cdot M(j-i) \cdot T(n-j) \\
&= 1 + \sum_{j=2}^{n} T(n-j) \sum_{i=1}^{j-1} T(i-1) \cdot M(j-i)  \\
&= 1 + \sum_{j=2}^{n} T(n-j) \sum_{i=1}^{j-1} M(j-i) \cdot T(i-1)  \\
&= 1 + \sum_{j=2}^{n} T(n-j) \sum_{i=1}^{j-1} M(i) \cdot T(j-i-1)  \\
&= 1 + \sum_{j=2}^{n} T(n-j) \sum_{i=2}^{j} M(i-1) \cdot T(j-i)
\end{align*}
Therefore, by combining with (\ref{eqn:M}) we get a simpler recurrence for $T(n)$:
\begin{align}
T(n) = & 1 + \sum_{j=2}^{n} T(n-j) \cdot M(j) & \notag\\
=& 1 + \sum_{j=0}^{n-2} T(n-2-j) \cdot M(j+2) & \textrm{ for } n \geq 1 \label{eqn:T2}
\end{align}
Because $M(0)=0$, we can then rewrite (\ref{eqn:M}) to get the following equivalent recurrence:
\begin{align}
M(n) &= \sum_{i=0}^{n-1} M(i) \cdot T(n-1-i) & \textrm{ for } n \geq 2 \label{eqn:M2}
\end{align}
Now we define two polynomials $p(x)=\sum_{n=0}^{\infty} M(n)x^{n}$ and $q(x)=\sum_{n=0}^{\infty} T(n)x^{n}$. From (\ref{eqn:T2}) and (\ref{eqn:M2}) we get the following equalities.
\begin{align*}
p(x) =& x + xp(x)q(x) \\
q(x) =& \frac{1}{1-x} + x^{2} q(x) \frac{p(x)-x}{x^{2}}
\end{align*}
After substituting $p(x) = \frac{x}{1-x q(x)}$ we get:
\begin{align*}
q(x) =& \frac{1}{1-x} + x^{2} q(x) \frac{p(x)-x}{x^{2}} \\
=& \frac{1}{1-x} + q(x) \left(\frac{x}{1-xq(x)}-x\right) \\
=& \frac{1}{1-x} + \frac{x^{2}q^{2}(x)}{1-xq(x)} \\
\end{align*}
and then:
\begin{align*}
q(x)(1-x)(1-xq(x)) =& (1-xq(x)) + x^{2}q^{2}(x)(1-x) \\
q(x)-xq^{2}(x)-xq(x)+x^{2}q^{2}(x) = & 1-xq(x)  + x^{2}q^{2}(x)(1-x) \\
q(x)-xq^{2}(x) = & 1 - x^{3}q^{2}(x) \\
q^{2}(x) (x^{3}-x) + q(x) - 1 =& 0
\end{align*}
So finally $q(x) = \frac{1 \pm \sqrt{1-4x(1-x^{2})}}{2x(1-x^{2})}$.
We can eliminate the positive branch through a simple sanity check by
setting $x=0$.  Thus, the generating function for the above sequence
of numbers is:

$$q(x) = \frac{1 - \sqrt{1 - 4x(1 - x^2)}}{2x(1 - x^2)} \enspace .$$

Interestingly, this generating function implies that the number of
valid MSS-trees for $n = 0, 1,2,...$ corresponds to OEIS
A157003\cite{OEIS}.  As for the asymptotics: the first singularity
encountered along the positive real axis for the function $q(x)$ is
located at $x \approx 0.2695944$.  By Pringsheim’s Theorem~\cite[see
  p.226 and Theorems IV.6, IV.7]{FS09} this implies that at least
$\log_2(\frac{1}{0.269594^n \texttt{poly(n)}}) \ge 1.89113n -
\Theta(\lg n)$ bits are required to represent an MSS-tree, provided
$n$ is sufficiently large.  Thus, we have proven the following
theorem:

\begin{theorem}
For an array $A$ of length $n$, any data structure that encodes the
solution to range maximum-sum segment queries must occupy at least
$1.89113n - \Theta(\lg n)$ bits, if $n$ is sufficiently large.
\end{theorem}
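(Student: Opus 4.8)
The plan is to establish the lower bound by showing that any encoding supporting $\textsc{RMaxSSQ}$ queries must be able to distinguish between all distinct MSS-trees, and then to count these trees asymptotically via the generating function $q(x)$. The argument has two logically separate parts: a \emph{distinguishability} part (an information-theoretic reduction showing the encoding must have at least $\log_2 T(n)$ bits) and an \emph{enumeration} part (extracting the asymptotic growth rate of $T(n)$ from $q(x)$). I would carry out the distinguishability part first. Concretely, for each general MSS-tree on $[1,n]$ one must exhibit an array $A$ (with $A[1]=0$) whose recursive $\textsc{RMaxSSQ}$ queries reproduce exactly that tree; the three invariants accompanying situations \ref{enum:left-sit}--\ref{enum:middle-sit} are precisely what guarantees such an array exists and that distinct trees arise from arrays that an encoding cannot conflate. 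Since the tree is recovered solely by issuing queries to the encoding, two arrays yielding different MSS-trees must receive different encodings; hence the number of bits is at least $\lceil \log_2 T(n)\rceil$.

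The second, more technical, part is the asymptotic enumeration of $T(n)$ from the closed form
\[
q(x) = \frac{1 - \sqrt{1 - 4x(1 - x^2)}}{2x(1 - x^2)}.
\]
Here I would invoke singularity analysis of generating functions. The coefficients $T(n) = [x^n]q(x)$ grow like $\rho^{-n}$ up to subexponential factors, where $\rho$ is the modulus of the dominant singularity of $q$. The radical $\sqrt{1 - 4x(1-x^2)}$ contributes a branch-point singularity at the smallest positive root $\rho$ of $1 - 4x(1-x^2) = 0$, i.e.\ $4x^3 - 4x + 1 = 0$, whose relevant root is $\rho \approx 0.2695944$; one must also check that the other potential singularities (the zeros $x = 0, \pm 1$ of the denominator $2x(1-x^2)$, and the other roots of the cubic) either cancel or lie farther from the origin, so that $\rho$ is genuinely dominant. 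By Pringsheim's theorem the smallest positive singularity governs the radius of convergence, giving $T(n) = \Theta(\rho^{-n} / \mathrm{poly}(n))$, whence $\log_2 T(n) \ge \log_2(\rho^{-n}) - \Theta(\lg n) = n\log_2(1/\rho) - \Theta(\lg n)$. Numerically $\log_2(1/0.2695944) \ge 1.89113$, yielding the claimed bound.

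The main obstacle I anticipate is the distinguishability part, not the enumeration. Establishing that every MSS-tree is \emph{realizable} by some array, and that the recursive query procedure deterministically recovers the tree that was planted, requires carefully discharging the ``border cases'' that the informal discussion of situations \ref{enum:left-sit}--\ref{enum:middle-sit} explicitly flags as vague—in particular the distinction between general and restricted subtrees (the forbidden empty drop and forbidden left-index-$i+1$ drop in situation \ref{enum:middle-sit}), and the handling of length-$1$ and length-$2$ ranges at the leaves. One must verify that the three maintained invariants (values confined to the open interval $(C[i],C[j])$, with the minimum left of the maximum) can be \emph{simultaneously} realized across all three children during the recursion, so that the construction never gets stuck. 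By contrast, the generating-function step is essentially mechanical once the closed form for $q(x)$ is in hand: locating $\rho$ as the dominant singularity and applying Pringsheim's theorem is standard analytic combinatorics, with the only genuine check being that the singularities of the denominator do not interfere.
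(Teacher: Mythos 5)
Your proposal follows essentially the same route as the paper: a distinguishability argument showing that each MSS-tree is realizable by some array and is recovered purely by issuing queries to the encoding (hence at least $\log_2 T(n)$ bits are needed), followed by extracting the growth rate of $T(n)$ from the generating function $q(x)$ via its dominant positive singularity at $x \approx 0.2695944$ together with Pringsheim's theorem, exactly as the paper does. The only differences are matters of explicitness---you spell out the information-theoretic reduction and verify that the branch point of $\sqrt{1-4x(1-x^2)}$ dominates the other candidate singularities (the zeros of $2x(1-x^2)$ and the remaining roots of $4x^3-4x+1$), details the paper leaves implicit---so this counts as the same approach, not a different one.
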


\section{\label{sec:application}Application to Computing \texorpdfstring{$k$}{k}-Covers}

Given an array $A$ of $n$ numbers and a number $k$, we want to find
$k$ disjoint segments $[i_{1},j_{1}], \ldots, [i_{k},j_{k}]$, called a
\emph{$k$-cover}, such that the total sum of all numbers inside,
called the score, is maximized.  For $k=1$ (the \textsc{RMaxSSQ}
problem on the entire array) this is a classic exercise, often used to
introduce dynamic programming.  For larger values of $k$, it is easy
to design an $\Oh(nk)$ time dynamic programming algorithm, but an
interesting question is whether we can do better. As shown by
Csurös~\cite{Csuros}, one can achieve $\Oh(n\log n)$ time complexity.
This was later improved to $\Oh(n\alpha(n,n))$~\cite{almost} and
finally to optimal $\Oh(n)$ time~\cite{optimal}.  In this section we
show that, assuming a constant time range maximum-sum segment
structure, which can be constructed in linear time, we can preprocess
the array in time $\Oh(n)$, so that given any $k$, we can compute a
maximum $k$-cover in $\Oh(k)$ time.  This improves the previous linear
time algorithm, which needs $\Oh(n)$ time to compute a maximum
$k$-cover regardless of how small $k$ is, so our algorithm is more
useful when there are multiple different values of $k$ for which we
want to compute a maximum $k$-cover.

We iteratively construct a maximum score $k$-cover for
$k=0,1,2,\ldots,n$. This is possible due to the following property
already observed by Csur\"os.

\begin{lemma}
A maximum score $(k+1)$-cover can be constructed from any maximum
score $k$-cover consisting of intervals
$[i_{1},j_{1}],\ldots,[i_{k},j_{k}]$ in one of the two ways:
\begin{enumerate}
\item adding a new interval $[i_{k+1},j_{k+1}]$ disjoint with all
  $[i_{1},j_{1}],\ldots,[i_{k},j_{k}]$,
\item replacing some $[i_{\ell},j_{\ell}]$ with two intervals
  $[i_{\ell},j'],[i',j_{\ell}]$.
\end{enumerate}
\end{lemma}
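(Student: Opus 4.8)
The plan is to work with the coverage indicators of the two covers rather than with the intervals directly. Fix the given optimal $k$-cover $\mathcal{C}_k$ with covered-cell set $U_k$, and let $C'$ be any optimal $(k+1)$-cover with covered set $U'$. Since any labelling of the cells by $\{0,1\}$ is a valid cover, I would view both $\mathcal{C}_k$ and $C'$ as subsets of $[1,n]$ and consider the symmetric difference $U_k \triangle U'$, decomposed into maximal runs on which exactly one of the two covers is present and the sign ($+$ for cells in $U'\setminus U_k$, $-$ for cells in $U_k\setminus U'$) is constant; call these runs the \emph{blocks} $B_1,\dots,B_m$, ordered along the array. Two facts drive the argument. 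First, the score is additive over blocks: writing $v(B)$ for the signed sum of $A$ over $B$ (positive for a $+$ block, negated for a $-$ block), we have $w(C')-w(\mathcal{C}_k)=\sum_t v(B_t)$ because the coverage changes are disjoint, where $w$ denotes the score. Second, the two operations in the statement are exactly the single-block toggles that raise the interval count by one: toggling a $+$ block that sits in a gap of $\mathcal{C}_k$ adjacent to no existing interval creates one new interval (operation~1), while toggling a $-$ block lying strictly in the interior of an interval of $\mathcal{C}_k$ splits that interval in two (operation~2).

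With this reformulation it suffices to exhibit a single block $B^*$ whose toggle (i)~raises the interval count by exactly one and (ii)~produces an optimal $(k+1)$-cover; such a toggle is automatically of type~(1) or~(2) by the shape analysis above. The heart of the argument is an exchange that yields~(ii) for free once~(i) holds. I would consider the cover $C''$ obtained from $\mathcal{C}_k$ by toggling \emph{all} blocks except $B^*$; equivalently, $C''$ is $C'$ with $B^*$ toggled back. Provided $B^*$ is separated from its neighbours (see below), $C''$ then has $k$ intervals, and additivity of the score gives $w(C'')=w(C')-v(B^*)$. Since $\mathcal{C}_k$ is optimal among covers of its cardinality, $w(C'')\le w(\mathcal{C}_k)$, which rearranges to $v(B^*)\ge w(C')-w(\mathcal{C}_k)$. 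Hence the cover obtained by toggling $B^*$ alone attains value $w(\mathcal{C}_k)+v(B^*)\ge w(C')$ and is therefore an optimal $(k+1)$-cover, establishing~(ii).

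It remains to guarantee the existence of a block $B^*$ satisfying~(i) together with the separation used above. Here I would choose $C'$, among all optimal $(k+1)$-covers, to maximize the overlap $|U_k\cap U'|$ with $\mathcal{C}_k$. Assigning to each block a cardinality increment $\delta_t\in\{-1,0,+1\}$ determined by how its two endpoints meet the interval boundaries of $\mathcal{C}_k$ (isolated, extend, merge, or split), these increments sum to $|C'|-|\mathcal{C}_k|=1$, so some $\delta_{t^*}=+1$, and we take $B^*=B_{t^*}$.

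The main obstacle will be making this last bookkeeping rigorous, because the per-block increments are not naively additive: two blocks of opposite sign can be \emph{adjacent}, with no cell on which the covers agree between them (this happens precisely when an interval of $C'$ ends exactly where an interval of $\mathcal{C}_k$ begins). At such an adjacency the endpoint effects of neighbouring blocks interfere, the sum $\sum_t\delta_t$ is no longer transparently $1$, and the complementary cover $C''$ may fail to have exactly $k$ intervals. I expect the overlap-maximizing choice of $C'$ to do the crucial work: whenever such a bad adjacency occurs, realigning the two abutting intervals strictly increases $|U_k\cap U'|$ without lowering the score or changing the cardinality, contradicting maximality. Discharging this case analysis---and thereby confirming both $\sum_t\delta_t=1$ and the clean cardinality of $C''$---is the one genuinely technical step; the score additivity and the complementary-toggle inequality are then routine. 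A minor point to settle up front is the convention that $\mathcal{C}_k$ is optimal among covers using \emph{at most} $k$ intervals, so that the inequality $w(C'')\le w(\mathcal{C}_k)$ is available whenever $C''$ uses at most $k$ intervals.
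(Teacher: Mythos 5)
A preliminary remark: the paper never proves this lemma---it is stated as a property ``already observed by Csur\"os'' and used as a black box to justify the iterative method---so there is no proof in the paper to compare against, and your attempt has to be judged on its own merits. Judged that way, it is a proof \emph{plan} with a genuine gap at its load-bearing step. The parts you do carry out are correct: the block decomposition of $U_k \triangle U'$, the additivity of the score over blocks, and the complementary-toggle exchange (remove $B^*$ from $C'$, compare the resulting $k$-cover against $\mathcal{C}_k$, and conclude $v(B^*) \ge w(C') - w(\mathcal{C}_k)$) would indeed finish the proof once a suitable block $B^*$ is exhibited. But exhibiting $B^*$ is precisely where the entire difficulty of the lemma lives, and you leave it open: as you yourself note, the naive bookkeeping $\sum_t \delta_t = 1$ breaks down when blocks of opposite sign abut, and your proposed repair---take $C'$ maximizing $|U_k \cap U'|$ and ``realign'' abutting intervals---is asserted rather than proven. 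No realignment move is actually defined; the natural candidates (shifting or extending the abutting interval of $C'$) change the score by sums of particular array entries whose signs you do not control, so it is not clear any score-preserving, cardinality-preserving, overlap-increasing move exists. A proof whose one acknowledged missing step is the crux is not yet a proof.

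It is worth recording that the gap can be closed, but by different means than the ones you sketch, and that your diagnosis of the obstacle is partly off target. Instead of assigning each block the increment it would cause when toggled alone, count for each block its effect on the number of \emph{left endpoints of maximal runs} (positions $p$ with $p$ covered and $p-1$ not). These per-block contributions lie in $\{-1,0,+1\}$ and, unlike your $\delta_t$, they sum \emph{exactly} to $\mathrm{runs}(U') - \mathrm{runs}(U_k)$, which is $+1$ in the non-degenerate case; hence some block contributes $+1$, with no extremal choice of $C'$ needed at all. One then checks: a $+1$-contributing $-$ block automatically has both neighbours in $U_k \cap U'$, so it is interior and yields operation~2 via your exchange; a $+1$-contributing $+$ block is automatically a full run of $U'$ disjoint from $U_k$, so your exchange applies and adding it to $\mathcal{C}_k$ is a legal operation~1 move---because the lemma's operation~1 only requires the new interval to be \emph{disjoint} from the existing ones, not non-adjacent. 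This last point is what you miss: the ``bad adjacency'' you worry about for $+$ blocks is an artifact of identifying covers with their cell sets, and is harmless for operation~1. The degeneracies that genuinely do require extra care are of a different kind: optimal covers containing mutually adjacent intervals (so runs of the cell set and intervals of the cover no longer coincide, e.g.\ an optimal cover $\{[1,1],[2,2]\}$) and tied scores, which create $w(C') = w(\mathcal{C}_k)$ boundary cases; these need a separate, if routine, treatment that neither your bookkeeping nor the refined one handles automatically.
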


As any such transformation results in a valid $(k+1)$-cover, we can
construct a maximum score $(k+1)$-cover by simply choosing the one
increasing the score the most.  In other words, we can iteratively
select the best transformation. Now the question is how to do so
efficiently.

We will first show that the best transformation of each type can be
found in $\Oh(1+k)$ time using the range maximum-sum queries. Assume
that we have both a range maximum-sum and a range minimum-sum query
structure available. Recall that out of all possible transformations
of every type, we want the find the one increasing the score the most.

\begin{enumerate}
\item To add a new interval $[i_{k+1},j_{k+1}]$ disjoint with all
  $[i_{1},j_{1}],\ldots,[i_{k},j_{k}]$ increasing the score the most,
  we guess an index $\ell$ such that the new interval is between
  $[i_{\ell},j_{\ell}]$ and $[i_{\ell+1},j_{\ell+1}]$ (if $\ell=0$ we
  ignore the former and if $\ell=k$ the latter condition).  Then
  $[i_{k+1},j_{k+1}]$ can be found with
  $\textsc{RMaxSSQ}(A,i_{\ell}+1,j_{\ell+1}-1)$.
\item To replace some $[i_{\ell},j_{\ell}]$ with two intervals
  $[i_{\ell},j'],[i',j_{\ell}]$ increasing the score the most, we
  observe that the score increases by $-S(j'+1,i'-1)$, hence we can
  guess $\ell$ and then find $(j'+1,i'-1)$ with
  $\textsc{RMinSSQ}(A,i_{\ell},j_{\ell})$.
\end{enumerate}

For every type, we need $1+k$ calls to one of the structures. If each
call takes constant time, the claimed $\Oh(1+k)$ complexity follows.

We will now show that, because we repeatedly apply the best
transformation, the situation is more structured and the best
transformation of each type can be found faster. To this end we define
a \emph{transformation tree} as follows. Its root corresponds to the
maximum-sum segment $[i,j]$ of the whole $A$, meaning that its weight
is $S(i,j)$, and has up to three children. If $A$ is empty or consists
of only negative numbers, the transformation tree is empty.

\begin{enumerate}
\item The left child is the transformation tree recursively defined
  for $A[1..i-1]$.
\item The middle child is the transformation tree recursively defined
  for $-A[i..j]$, i.e., for a copy of $A[i..j]$ with all the numbers
  multiplied by $-1$.
\item The right child is the transformation tree recursively defined
  for $A[j+1..n]$.
\end{enumerate}

If any of these ranges is empty, we don't create the corresponding
child. Now the transformation tree is closely related to the maximum
score $k$-covers.  

\begin{lemma}
\label{lem:subtree1}
For any $k\geq 1$, a $k$-cover constructed by the iterative method
corresponds to a subtree of the transformation tree containing the
root.
\end{lemma}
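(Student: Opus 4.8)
The plan is to prove the statement by induction on $k$, maintaining the invariant that the cover built after $k$ steps corresponds to a connected, root-containing subtree $S$ of the transformation tree with exactly $k$ nodes. Since such a subtree is precisely a set of nodes closed under taking parents, it suffices to show that each step of the iterative method adds a node whose parent already lies in $S$. To set this up, I would first attach to every node $v$ of the transformation tree the contiguous range $R(v)$ of array positions on which it is recursively defined, together with a sign $\sigma(v) \in \{+1,-1\}$ recording whether $v$ lives on a copy of $A$ or of $-A$: the root has $\sigma=+1$ and range $[1,n]$, the left and right children inherit the parent's sign and take the ranges to the left and right of the parent's segment, while the middle child flips the sign and keeps the parent's segment as its range. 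With this convention, a node $v$ with $\sigma(v)=+1$ contributes an interval that is \emph{included} in the cover, whereas a node with $\sigma(v)=-1$ contributes a \emph{gap} carved out of an already-selected interval; in both cases the weight of $v$ equals the amount by which the score changes when $v$ is processed.

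Next I would pin down the exact dictionary between the two descriptions. The base case $k=1$ is immediate: the iterative method starts from the maximum-sum segment of the whole array, which is by definition the root, so the $1$-cover corresponds to the single-node subtree $\{\text{root}\}$. For the inductive step, assuming the cover after $k$ steps corresponds to a subtree $S$, I claim that the transformations considered by the method to pass to a $(k+1)$-cover are in bijection with the \emph{frontier} nodes of $S$ (the non-empty children of nodes of $S$ that do not themselves lie in $S$), and that the score gain of a transformation equals the weight of the corresponding frontier node. Concretely, a type-1 transformation inserting a new interval into a gap corresponds to a positive-sign frontier node whose range is exactly that gap, its gain being the maximum sum over the gap, i.e. the node's weight; a type-2 transformation that splits an interval by deleting its minimum-sum sub-segment corresponds to a negative-sign frontier node whose range is that interval, its gain $-S(j'+1,i'-1)$ being exactly the maximum sum over the negated range, i.e. again the node's weight.

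With the dictionary in place, the induction closes by case analysis on the node $w$ the method adds. If $w$ is positive, inserting its interval $I(w)$ splits the former gap $R(w)$ into a left gap, the new interval, and a right gap; the three newly available transformations (insert to the left of $I(w)$, split $I(w)$, insert to the right of $I(w)$) match exactly the three children of $w$, with ranges and gains prescribed by the recursive definition, while all other pending transformations are untouched. If $w$ is negative, deleting its gap $I(w)$ splits one selected interval into two pieces; the newly available transformations (split the left piece, insert inside the just-deleted gap, split the right piece) again match the three children of $w$, the crucial point being that the middle child of $w$ is defined on the \emph{re-negated} gap and therefore governs insertion of a genuine interval inside the hole. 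In either case the set of available transformations changes exactly by removing $w$ and adding its children, so the bijection with the frontier is preserved; since the method adds the maximum-gain transformation, it adds the maximum-weight frontier node, in particular a child of a node already in $S$. Hence $S$ stays closed under parents, completing the induction.

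The main obstacle I anticipate is the careful bookkeeping of ranges and signs in the inductive step, and in particular verifying the gain-equals-weight identity for the negative-sign nodes, where two layers of negation interact: a type-2 split is read off the middle child's maximum-sum segment on $-A$, while inserting an interval inside the resulting hole is read off that child's own middle child, which restores the original sign. I would also treat the degenerate cases cleanly---empty gaps, intervals admitting no beneficial split, and ranges consisting only of non-positive numbers---by matching them to the empty children that the transformation tree omits, so that only transformations of positive gain correspond to actual frontier nodes and the greedy choice remains well defined for every $k$ in the valid range.
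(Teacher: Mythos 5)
Your proof is correct and follows essentially the same route as the paper's: induction on $k$, with the base case $k=1$ being the root and the inductive step showing that every available transformation (inserting an interval into a gap, or splitting an interval) corresponds to a child of a node already in the subtree. The paper dispatches that inductive step with an ``it is easy to see'' remark, whereas you make it explicit via the sign/range bookkeeping and the frontier-node bijection with gain-equals-weight; this is a more detailed rendering of the same argument, not a different one.
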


\begin{proof}
We apply induction on $k$. For $k=1$, the cover is exactly the
maximum-sum segment, which corresponds to the root of the whole
transformation tree. Now assume that the lemma holds for some $k \ge
1$ and consider how the iterative method proceeds. It is easy to see
that any of the possible transformations, i.e., either adding a new
interval or splitting an existing interval into two, correspond to a
child of a node already in the subtree corresponding to the maximum
$k$-cover by the inductive hypothesis. Hence the lemma holds for $k+1$
and so for all $k\geq 1$.  \qed
\end{proof}

This suggests that a maximum $k$-cover can be found by computing a
maximum weight subtree of the transformation tree containing the root
and consisting of $k$ nodes. Indeed, any such subtree corresponds to a
$k$-cover, and by Lemma~\ref{lem:subtree1} a maximum $k$-cover
corresponds to some subtree. To find a maximum weight subtree
efficiently, we observe the following property of the transformation
tree.

\begin{lemma}
\label{lem:heap}
The transformation tree has the max-heap property, meaning that the
weight of every node is at least as large as the weight of its parent.
\end{lemma}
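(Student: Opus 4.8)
The plan is to prove the max-heap property by a single local comparison between a node and each of its children, which then propagates to the whole tree. Fix a node $v$; let $[i_0,j_0]$ be the range of the (possibly negated) array attached to $v$, and write $\sigma(a,b)$ for the sum of that array over a sub-range $[a,b]$ (so $\sigma$ plays the role of $S$ for $v$'s array), with the convention that an empty range has sum $0$. By construction the weight $w(v)$ equals $\sigma(i,j)$ for the maximum-sum segment $[i,j]\subseteq[i_0,j_0]$, hence $w(v)=\max_{[a,b]\subseteq[i_0,j_0]}\sigma(a,b)\ge 0$ and every sub-range of $[i_0,j_0]$ has sum at most $w(v)$. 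I will show that each child of $v$ has weight at most $w(v)$; since the root is attached to the full range $[1,n]$, this yields that the root carries the maximum weight and that weights are non-increasing from the root downwards, which is exactly the max-heap property.

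The left and right children are the easy cases. They are attached to $[i_0,i-1]$ and $[j+1,j_0]$ respectively, under the same sign convention as $v$, so each of their maximum-sum segments is itself a sub-range of $[i_0,j_0]$ and therefore has sum at most $w(v)$. The middle child is the step I expect to be the main obstacle, since it is attached to the \emph{negated} array on $[i,j]$: its weight equals $\max_{[a,b]\subseteq[i,j]}(-\sigma(a,b))=-\min_{[a,b]\subseteq[i,j]}\sigma(a,b)$, which is a quantity about the negated array rather than a sum occurring directly at $v$. To control it I decompose the parent segment as $\sigma(i,j)=\sigma(i,a-1)+\sigma(a,b)+\sigma(b+1,j)$, which rearranges into $-\sigma(a,b)=\sigma(i,a-1)+\sigma(b+1,j)-w(v)$. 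Because $[i,a-1]$ and $[b+1,j]$ are again sub-ranges of $[i_0,j_0]$, each of $\sigma(i,a-1)$ and $\sigma(b+1,j)$ is at most $w(v)$, so $-\sigma(a,b)\le 2w(v)-w(v)=w(v)$ for every $[a,b]$; taking the maximum over $[a,b]$ shows the middle child's weight is at most $w(v)$ as well.

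Finally, the argument never uses anything about the sign of $v$'s array beyond the defining maximality of $w(v)$, so it applies verbatim at every node of the recursion, including deep inside the middle subtrees where the active array is a repeatedly negated piece of $A$. Consequently no node's weight exceeds its parent's weight, which is precisely the claimed max-heap property, and in particular the root is a maximum-weight node of the tree (the fact subsequently exploited by the $k$-cover algorithm, since the $k$ heaviest nodes then form a connected subtree containing the root). The only point I would state with care is the treatment of degenerate ranges in the middle-child decomposition: when $a=i$ or $b=j$ one of $[i,a-1]$, $[b+1,j]$ is empty, and the convention that an empty range has sum $0\le w(v)$ keeps the bound intact.
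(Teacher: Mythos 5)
Your proof is correct and takes essentially the same route as the paper: the left and right children are dispatched by the maximality of $w(v)$ over subranges, and the middle child by the same three-way decomposition $\sigma(i,j)=\sigma(i,a-1)+\sigma(a,b)+\sigma(b+1,j)$, which you state as the direct bound $-\sigma(a,b)\le 2w(v)-w(v)=w(v)$ while the paper phrases the identical inequality contrapositively (if $-S(i',j')>S(i,j)$ then $\frac{S(i,i'-1)+S(j'+1,j)}{2}>S(i,j)$, so one flanking sum would beat the maximum-sum segment). Your explicit handling of the degenerate cases $a=i$ or $b=j$ via the empty-range convention $0\le w(v)$ is a minor point the paper leaves implicit, but it does not change the argument.
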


\begin{proof}
We apply induction on $n$. For the induction step, we need to prove
that the weight of the root is at least as large as the weight of all
of its children. This is immediate in case of the left and the right
child, because the weight of the root is the largest $S(i,j)$ for
$1\leq i\leq j\leq n$, the weight of the left child is the largest
$S(i',j')$ for $1\leq i'\leq j' < i$, and the weight of the right
child is the largest $S(i',j')$ for $j<i'\leq j'\leq n$. The weight of
the middle child, if any, is the largest $-S(i',j')$ for $i\leq i'\leq
j'\leq j$, and to finish the proof we need to argue that any such
$-S(i',j')$ is at most $S(i,j)$. But if $-S(i',j')>S(i,j)$, then
$\frac{S(i,i'-1)+S(j'+1,j)}{2}>S(i,j)$, so either $S(i,i'-1)>S(i,j)$
or $S(j'+1,j)>S(i,j)$. In either case, $S(i,j)$ was not a maximum-sum
segment, a contradiction.  \qed
\end{proof}

Therefore, to find a maximum weight subtree consisting of $k$ nodes,
we can simply choose the $k$ nodes with the largest weight in the
whole tree (we assume that the weights are pairwise distinct, and if
not we break the ties by considering the nodes closer to the root
first). This can be done by first explicitly constructing the
transformation tree, which takes $\Oh(n)$ time assuming a constant
time maximum and minimum range-sum segment structures. Then we can use
the linear time selection algorithm~\cite{median} to find its $k$
nodes with the largest weight. This is enough to solve the problem for
a single value of $k$ in $\Oh(n)$ time.

If we are given multiple values of $k$, we can process each of them in
$\Oh(k)$ time assuming the following linear time and space
preprocessing. For every $i=0,1,2,\ldots,\log n$ we select and store
the $2^{i}$ nodes of the transformation tree with the largest
weight. This takes $\Oh(n+n/2+n/4+...)=\Oh(n)$ total time and space.
Then, given $k$, we find $i$ such that $2^{i}\leq k < 2^{i+1}$ and
again use the linear time selection algorithm to choose the $k$ nodes
with the largest weight out of the stored $2^{i+1}$ nodes.

\bibliographystyle{splncs03}
\bibliography{biblio}

\newpage
\appendix

\section{Alternatives to the Empty Range\label{app:emptyrange}}

Although the data structure we describe returns the empty range in the
case that $[i,j]$ only contains non-positive numbers, it is a simple
modification of our data structure to return the index of the largest
non-positive number instead.  To do this, we keep an additional data
structure that supports range maximum queries on $A$: this occupies
$2n+o(n)$ bits by Lemma~\ref{lem:rmq}.  Whenever our data structure
returns the empty range, we can instead return the result of
$\textsc{RMaxQ}(A,i,j)$ query, which has the desired effect.

\section{\label{app:one-page}Navigation in One-Page Graphs}

We give a description of Munro and Raman's representation, with slight
modification~\cite{MR01}.  Given a one-page graph $G$ there is an
implicit labelling of the vertices from left-to-right along its book
spine.  We represent such a graph using a sequence of balanced
parenthesis sequence $B$.  Each vertex $u$ is represented as a pair
``()'', and the edges incident to the vertex are represented either as
an opening or closing parenthesis sequence $S_u$ that follow this
pair. For the purposes of exposition we orient the edges so that edge
$(u,v)$, where $u < v$ is directed from $u$ to $v$: this is just to
simpify the description, the edges are actually undirected.  Consider
the vertex labelled $u \in [1,n]$.  Each edge directed into $u$ is
represented in the prefix of $S_u$, and each edge directed out of $u$
is represented in the suffix.  Let $v_z = \textsc{Neighbour}(u,z)$.
Then $(v_1,u),(v_2,u),...,(v_x,u)$ are the edges directed into $u$,
and $(u,v_{x+1}), ..., (u,v_{x+y})$ are the edges directed out of $u$,
for some $x+ y = \textsc{Degree}(u)$.  Then the sequence $S_u =
)^x(^y$.  The $i$-th ``)'' from left-to-right represenents the edge
$(v_{x-i +1},u)$, whereas the $i$-th ``(`` from left-to-right
represents the edge $(u,v_{x+i})$.

We construct the data structure of Geary et al.~\cite{GRRR06} on $B$.
We use this particular structure, since has a simple construction
algorithm that takes $\Oh(n+\numEdges)$ deterministic worst-case time
(actually it can be constructed in $o(n+\numEdges)$ time, but we only
need the weaker fact).  The data structure for $B$ occupies
$2(n+\numEdges)+o(n+\numEdges)$ bits: note that $\numEdges$ is not
necessarily $\Oh(n)$ in general since, although the graph $G$ is planar,
it is a multigraph.

We also build rank/select auxiliary structures $W_1$ on the balanced
parenthesis sequence $B$, each ``()'' will represent a $1$ bit and all
other combinations of pairs of parenthesis represent a $0$.  This
takes $o(n+\numEdges)$ bits in total, since we need not store the bit
vector explicitly, just lookup tables~\cite{MR01}: the details here
are rather technical, but mainly involve specialized table lookup.
This allows us, given a label $u$, to jump immediately to the balanced
parenthesis pair ``()'' that represents vertex $u$ using a select
operation on $W_1$.  Similarly, given an arbitrary open/close
parenthesis at position $i$, we can use the rank operation on $W_1$ to
compute the vertex associated with that open/close parenthesis.  Given
this representation, it is easy to compute $\textsc{Degree}(u)$ of
some vertex $u$: we simply return the distance (minus one) between the
ending ``)'' of the pair representing $u$, and the starting ``(''
representing $u+1$.

The representation just described almost allows us to perform the
operation $\textsc{Neighbour}(G,u,i)$.  The issue is that listing the
neighbours in the obvious way returns them in a slightly strange
order: we get the edges directed into $u$ in non-increasing order of
their starting vertex, followed by the edges directed out of $u$ in
non-decreasing order of their starting vertex.  In fact, for our
application this is enough, because in our application all of the
vertices either have no edges directed out, or no edges directed in.
Thus, we can use $\textsc{Degree}(u)$, and rank/select operations on
$W_1$ to return the neighbours in non-decreasing order.  The
$\textsc{Order}(G,u,v)$ operation becomes trivial as well, since, it
is known that if $(u,v)$ is an edge, we can find the pair of
parentheses that represent $(u,v)$ in constant time using the
representation just described.  Once we have either the opening or
closing parenthesis representing $(u,v)$ in $S_u$, we can easily
compute its order using $\textsc{Degree}(u)$ and rank/select
operations on $W_1$.

However, in the more general setting described in the lemma it is not
difficult to perform all these operations as described. In addition to
the previous data structures, we construct auxiliary structures $W_2$
which mark, for each vertex $u$, the point in $S_u$ where the first
opening parenthesis appears.  The $W_2$ structures allow us to perform
rank and select operations on these marked positions.  These also
occupy $o(n+m)$ bits if used in combination with $W_1$.  The details
are, again, rather technical, but the idea is to used specialized
lookup tables, bit masking, and auxiliary rank/select structures.  The
purpose of $W_2$ is to allow us to perform the operations
$\textsc{Order}$ and $\textsc{Neighbour}$, as the closing parenthesis
in the prefix of $S_u$ are stored in non-ascending order of their
endpoints.  It is not difficult to see that using rank and select on
$W_2$, together with a small calculation, allow us to support
$\textsc{Order}$ and $\textsc{Neighbour}$ operations in constant time.

As for construction time, given the adjacency list representation, we
can determine for all vertices $u \in V(G)$ the string $S_u$ in
$\Oh(n+m)$ time in total.  This allows us to write down the balanced
parenthesis sequence in $\Oh(n+m)$ time.


\end{document}